\newtheorem{theorem}{Theorem}[section]
\newtheorem{lemma}[theorem]{Lemma}
\theoremstyle{definition}
\newtheorem{definition}[theorem]{Definition}
\newcommand{\Mod}[1]{\ (\textup{mod}\ #1)}
\newcommand{\F}{\mathbb{F}}
\newcommand{\N}{\mathbb{N}}
\newcommand{\Z}{\mathbb{Z}}
\newcommand{\bv}{\mathbf{v}}
\newcommand{\bw}{\mathbf{w}}
\newcommand{\bu}{\mathbf{u}}
\newcommand{\bs}{\mathbf{s}}
\newcommand{\0}{\mathbf{0}}
\newcommand{\ba}{\mathbf{a}}
\newcommand{\bb}{\mathbf{b}}
\newcommand{\bc}{\mathbf{c}}
\newcommand{\bd}{\mathbf{d}}
\newcommand{\bx}{\mathbf{x}}
\newcommand{\by}{\mathbf{y}}
\newcommand{\etal}{\emph{et al. }}
\newcommand{\ie}{\emph{i.e.}}
\DeclarePairedDelimiterX{\inp}[2]{\langle}{\rangle}{#1, #2}
\DeclarePairedDelimiter\abs{\lvert}{\rvert}
\DeclareMathOperator{\lcm}{lcm}
\title[Skew Cyclic Codes Over $\F_4 R$] 
      {Skew Cyclic Codes Over $\F_4 R$}
\author[Benbelkacem, Ezerman, Abualrub, and Batoul]{}
\subjclass{Primary: 11T71, 11T06, 94B15}
 \keywords{to be added.}
 \email{benbelkacem213@gmail.com}
 \email{fredezerman@ntu.edu.sg}
 \email{abualrub@aus.edu}
 \email{a.batoul@hotmail.fr}
\thanks{$^*$ Corresponding author: N.~Benbelkacem}
\begin{document}
\maketitle

\centerline{\scshape Nasreddine Benbelkacem$^*$}
\medskip
{\footnotesize
 \centerline{Faculty of Mathematics, University of Science and Technology Houari Boumediene,}
\centerline{BP 32 El Alia, Bab Ezzouar, 16111 Algiers, Algeria}
}

\medskip

\centerline{\scshape Martianus Frederic Ezerman}
\medskip
{\footnotesize
\centerline{School of Physical and Mathematical Sciences, Nanyang Technological University,}
\centerline{21 Nanyang Link, Singapore 637371}
}

\medskip

\centerline{\scshape Taher Abualrub}
\medskip
{\footnotesize
	\centerline{Department of Mathematics and Statistics, College of Arts and Sciences,}
	\centerline{American University of Sharjah, P.O. Box 26666 Sharjah, United Arab Emirates}
}
\medskip

\centerline{\scshape Aicha Batoul}
\medskip
{\footnotesize
	\centerline{Faculty of Mathematics, University of Science and Technology Houari Boumediene,}
	\centerline{BP 32 El Alia, Bab Ezzouar, 16111 Algiers, Algeria}
}
\bigskip

 \centerline{(Communicated by the associate editor name)}

\begin{abstract}
This paper considers a new alphabet set, which is a ring that we call $\F_4R$, to construct linear error-control codes. Skew cyclic codes over the ring are then investigated in details. We define a nondegenerate inner product and provide a criteria to test for self-orthogonality. Results on the algebraic structures lead us to characterize $\F_4R$-skew cyclic codes. Interesting connections between the image of such codes under the Gray map to linear cyclic and skew-cyclic codes over $\F_4$ are shown. These allow us to learn about the relative dimension and distance profile of the resulting codes. Our setup provides a natural connection to DNA codes where additional biomolecular constraints must be incorporated into the design. We present a characterization of $R$-skew cyclic codes which are reversible complement.
\end{abstract}

\section{Introduction}\label{sec:intro}

The use of noncommutative rings to construct error control codes has recently been an active research area, initiated by the seminal works of Boucher \etal in~\cite{Boucher2007} and~\cite{Boucher2008} as well as that of Abualrub \etal in~\cite{Abualrub2010}. In the first two, Boucher and collaborators generalized the notion of cyclic codes by using generator polynomials in a noncommutative polynomial ring called the skew polynomial
ring. They supplied examples of skew cyclic codes with Hamming distances larger
than previously best-known linear codes of the same length and dimension.  In~\cite{Abualrub2010}, Abualrub \etal generalized the concept of skew cyclic codes to skew quasi-cyclic codes. They then constructed several new codes with Hamming distances exceeding the Hamming distances of the previously best-known linear codes with comparable parameters. 

Another emerging topic in the studies of error correcting codes is additive codes over mixed alphabets. Borges \etal introduced the class of $\Z_{2}\Z_{4}$-additive codes in~\cite{Borges2010}. This class generalizes binary and quaternary linear codes. A $\Z_{2}\Z_{4}$-additive code is defined to be a subgroup of $\Z_{2}^{r}\Z_{4}^{s}$. Abualrub \etal studied the structure of $\Z_{2}\Z_{4}$-additive cyclic codes in~\cite{Abualrub2014}. They determined the generator polynomials and the minimal generating sets for these codes.

In this paper, we merge the topic of skew cyclic codes with that of codes over mixed alphabets. In particular, we study the structure of linear skew cyclic codes over the ring $\F_{4}R$ where $\F_{4}$ is the finite field of four elements and $R=\{a + v b \mid a,b \in \F_{4}\}$ is the commutative ring with $16$ elements with $v^{2}=v$. Any codeword $\bc$ in a skew cyclic code $C$ over $\F_{4}R$ has the form $\bc=(a_{0},a_{1},...,a_{\alpha -1},b_{0},b_{1},...,b_{\beta -1})\in 
\F_{4}^{\alpha }R^{\beta }$. 

The followings are our contributions. 
\begin{enumerate}
	\item We show that the dual of a skew cyclic code over $\F_{4}R$ is also a skew cyclic code. In fact, skew cyclic codes over $\F_{4}R$ are left $R[X,\theta ]$-submodules of 
	$R_{\alpha ,\beta }=\F_{4}[X]/\langle X^{\alpha }-1\rangle \times R[X,\theta ]/\left\langle X^{\beta }-1\right\rangle$.
	 
	\item We determine their generator polynomials and establish interesting results that relate these codes to cyclic and quasi-cyclic codes over $\F_4 R$. First, we show that a skew cyclic code over $\F_{4}R$ is equivalent to an $\F_{4}R$-cyclic code if $\alpha$ and $\beta $ are both odd integers. Second, we establish that if $\alpha $ and $\beta $ are both even integers, then an $\F_{4}R$-skew cyclic code $C$ is equivalent to an $\F_{4}R$ quasi-cyclic code of index $2$. 
	
	\item Conditions for skew cyclic codes over $\F_{4}R$ to be self-orthogonal are studied.
	
	\item We use the Gray mapping to associate these codes to codes over $\F_{4}$ of length $\alpha +2 \beta$ and exhibit a nice relationship between these codes and their images over $\F_{4}$. The Gray image of any skew cyclic code over $\F_{4}R$ is the product of a cyclic code over $\F_{4}$ of length $\alpha$ and two skew cyclic codes, each of length $\beta$. We supply examples of skew cyclic codes over $\F_{4}R$ and their respective Gray images for different lengths. 
	
	\item Applications of these codes to DNA computing are included in our treatment.

\end{enumerate}

\section{Preliminaries}\label{sec:prelims}
	
Let $\F_{4}=\{0,1,w,w^{2}=w+1\}$ and $R=\{a+ vb \mid a,b$ in $\F_{4}\}$ be, 
respectively, the finite field with four elements and the commutative ring with $16$ elements where $v^{2}=v$. It is well-known that $R$ is a finite non-chain ring with two maximal ideals $\langle v\rangle$ and $\langle v+1\rangle$, making $R/\langle v\rangle $ and $R/\langle v+1\rangle $ isomorphic to $\F_{4}$. The Chinese Remainder Theorem then implies that $R=\langle v\rangle \times \langle v+1\rangle$. As was shown in~\cite{Bayram}, $R$ can be uniquely expressed as $\{a + vb = (b+a) v + a (v+1) \mid a,b\in \F_{4} \}$.
	
Let $A\oplus B=\{a+b\mid a\in A,b\in B\}$ and $A\otimes B=\{( a,b) \mid a\in A,b\in B\}$ as defined in~\cite{ZWS10}. An $\F_4$-linear code of length $n$ is a subspace of $\F_4^n$. A subset $C$ of $R^n$ is a \textit{linear code over $R$} if $C$ is an $R$-submodule. Given a linear
code $C$ over $R$, let 
\begin{align*}
C_{1} & \triangleq \{ \mathbf{x} + \mathbf{y} \in \F_{4}^{n} \mid (\mathbf{x}+ 
\mathbf{y}) v + \mathbf{x} (v+1) \in C \mbox{ for some } \mathbf{x}, 
\mathbf{y} \in \F_{4}^{n}\} \mbox{ and } \\
C_{2} & \triangleq \{\mathbf{x} \in \F_{4}^{n} \mid (\mathbf{x} + \mathbf{y}) v + 
\mathbf{x} (v+1) \in C \mbox{ for some } \mathbf{y} \in \F_{4}^{n}\}.
\end{align*}
One can quickly verify that $C_{1}$ and $C_{2}$ are linear codes over $\F_{4}$. In fact, any linear code $C$ over $R$ can be expressed as $C=v C_{1} \oplus (v+1) C_{2}$. Let $r = a + v b \in R$ and $\bc =(c_1,c_2,\ldots,c_n) \in C$, \emph{i.e.}, $c_j=a_j + v b_j$ with $a_j,b_j \in \F_4$ for $1 \leq j \leq n$. The $j$-th entry of $r \bc$ is 
\begin{align*}
(a + v b) (a_j + v b_j) &= ((a+b) v + a (v+1))(a_j + v b_j) \\
&= \underbrace{a a_j}_{x} + v (\underbrace{a b_j + b a_j + b b_j}_{y}) =
(x+y) v + x (v+1).
\end{align*}
Hence, $r \bc $ can be written in terms of $C_1$ and $C_2$ with 
\begin{equation*}
\mathbf{x} = a (a_1, a_2, \ldots, a_n) \mbox{ and } \mathbf{y} = (a+b)(b_1,b_2, \ldots, b_n) + b (a_1, a_2, \ldots, a_n).
\end{equation*}

\begin{definition}
Let an {\it automorphism} $\theta$ over $R$ be defined by
\begin{equation}\label{eq:theta}
\theta : R \mapsto R \mbox{ sending } a+vb \mapsto a^{2}+(v+1)~b^{2}.
\end{equation}
Restricted to $\F_4$, it interchanges $w$ and $w^2$ while keeping $\{0,1\}$ fixed. Note that our $\theta$ here is equal to the composition of automorphisms $\varphi \circ \theta_1$ in~\cite{Gursoy2014}. A subset $C$ of $R^{n}$ is said to be an {\it $R$-skew cyclic code} of length $n$ if two conditions are satisfied.
\begin{enumerate}
	\item $C$ is an $R$-submodule of $R^{n}$.
	\item If $\bc=\ (c_{0},c_{1},...,c_{n-1})\in C$ then the \textit{skew cyclic shift} of $\bc$ over $R$, denoted by $T_{\theta }(\bc) 
	\triangleq (\theta(c_{n-1}),\theta (c_{0}),...,\theta (c_{n-2}))$, must also be in $C$.
\end{enumerate}
\end{definition}

It is often convenient to associate a vector $\mathbf{a} = (a_0,a_1,\ldots,a_{n-1})$ with a polynomial $a(X):=a_0+a_1 X + \ldots + a_{n-1} X^{n-1}$ in an indeterminate $X$. This allows for constructions of codes using results from the algebra of polynomial rings.
	
The next two theorems can be inferred by a slight modification on the
corresponding theorems in~\cite{Gursoy2014} with $q$ restricted to $4$. The respective proof is therefore omitted for brevity.
	
\begin{theorem}(From \cite[Theorem 3]{Gursoy2014}) 
Let $C=v C_{1} \oplus (v+1) C_{2}$ be a linear code over $R$. Then $C$ is an $R$-skew cyclic code if and only if $C_{1}$ and $C_{2} $ are skew cyclic codes over $\F_{4}$.
\end{theorem}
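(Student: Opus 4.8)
The plan is to work entirely through the idempotent decomposition $C = vC_{1} \oplus (v+1)C_{2}$ supplied above, reducing the single condition ``$C$ is closed under $T_{\theta}$'' to two independent conditions on the $\F_4$-components. First I would put a codeword into component form: given $\bc \in C$, write $c_{j} = u_{j} v + w_{j}(v+1)$ with $u_{j}, w_{j} \in \F_4$, so that $\bc = v\bu + (v+1)\bw$ where $\bu = (u_{0},\ldots,u_{n-1}) \in C_{1}$ and $\bw = (w_{0},\ldots,w_{n-1}) \in C_{2}$. Because $v$ and $v+1$ are orthogonal idempotents with $v+(v+1)=1$, every element of $R^{n}$ has a unique such decomposition, and membership in $C$ is equivalent to the conjunction of membership of the two components in $C_{1}$ and $C_{2}$, respectively. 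This orthogonality is the mechanism that lets me split the shift condition cleanly across the two factors.

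The crux is a direct computation of $T_{\theta}(\bc)$ in these coordinates, and the whole argument hinges on how $\theta$ interacts with the idempotents. Using that $\theta$ is a ring automorphism together with the defining formula~\eqref{eq:theta}, I would first record the action of $\theta$ on $v$, on $v+1$, and on the scalars of $\F_4$ (where it is the Frobenius $x\mapsto x^{2}$). Applying $\theta$ coordinatewise and then rotating the indices, $T_{\theta}(\bc)$ is again expressed in the form $v(\cdots) + (v+1)(\cdots)$, with each inner vector being the $\F_4$ skew-cyclic shift, namely the index rotation twisted by the Frobenius $\theta|_{\F_4}$, applied to $\bu$ or $\bw$. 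Collecting the $v$- and $(v+1)$-parts then shows that $T_{\theta}(\bc)$ lies in $C$ exactly when the relevant skew-cyclic shifts of $\bu$ and $\bw$ fall back into the component codes.

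With this computation in hand both implications are short. For the forward direction, assuming $C$ is $R$-skew cyclic, I take an arbitrary $\bu \in C_{1}$ (respectively $\bw \in C_{2}$), lift it to a codeword of $C$ by pairing it with any element of the other component, apply $T_{\theta}$, and read off from the unique idempotent decomposition that its $\F_4$ skew-cyclic shift stays inside $C_{1}$ (respectively $C_{2}$); hence each component is skew cyclic over $\F_4$. Conversely, if $C_{1}$ and $C_{2}$ are both skew cyclic, then for any $\bc = v\bu+(v+1)\bw \in C$ the computed $v$- and $(v+1)$-parts of $T_{\theta}(\bc)$ lie in $C_{1}$ and $C_{2}$, so $T_{\theta}(\bc)\in C$ and $C$ is $R$-skew cyclic. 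The main obstacle I anticipate lies entirely in the bookkeeping of the first computation: one must track the Frobenius twist carried by $\theta$ on the scalars \emph{simultaneously} with its effect on the idempotents $v$ and $v+1$, and verify carefully which $\F_4$-factor each term lands in. Once the action of $\theta$ on $v$ and $v+1$ is pinned down and the resulting maps are correctly identified as the $\F_4$ skew-cyclic shift governed by $\theta|_{\F_4}$, everything else reduces to routine coordinate matching.
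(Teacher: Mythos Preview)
The paper does not give its own proof; it simply cites \cite{Gursoy2014} and omits the argument. Your idempotent-decomposition plan is the standard one and would succeed verbatim for an automorphism of $R$ that \emph{fixes} $v$ and $v+1$. The genuine gap lies precisely at the step you yourself flagged as the main obstacle, and it is not merely bookkeeping.

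Carry out the computation you postponed. From \eqref{eq:theta}, writing $v=0+v\cdot 1$ gives $\theta(v)=0+(v+1)\cdot 1=v+1$, and likewise $\theta(v+1)=v$; thus this particular $\theta$ \emph{swaps} the two idempotents while acting as Frobenius on $\F_4$. Hence for $\bc=v\bu+(v+1)\bw$ one obtains
\[
T_{\theta}(\bc)\;=\;v\,T_{\theta|_{\F_4}}(\bw)\;+\;(v+1)\,T_{\theta|_{\F_4}}(\bu),
\]
so closure of $C=vC_1\oplus(v+1)C_2$ under $T_{\theta}$ is equivalent to $T_{\theta|_{\F_4}}(C_1)\subseteq C_2$ and $T_{\theta|_{\F_4}}(C_2)\subseteq C_1$, \emph{not} to each $C_i$ being skew cyclic on its own. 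These two conditions are genuinely different. For $n=2$, take $C_1=\F_4^2$ and $C_2=\{\0\}$: both are trivially $\F_4$-skew cyclic, yet $T_\theta(v,0)=(0,v+1)\notin v\F_4^2$, so $C$ is not $R$-skew cyclic. Conversely, $C_1=\F_4\times\{0\}$ and $C_2=\{0\}\times\F_4$ produce an $R$-skew-cyclic $C$ (one checks $T_\theta(va,(v+1)b)=(vb^2,(v+1)a^2)\in C$) even though neither $C_i$ is skew cyclic over $\F_4$. Thus the ``routine coordinate matching'' you anticipate cannot close the argument for the $\theta$ defined in this paper; the statement imported from \cite{Gursoy2014} is proved there for the idempotent-preserving automorphism $\theta_1$, and the extra factor $\varphi$ in the present $\theta=\varphi\circ\theta_1$ is exactly what interchanges the components.
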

	
\begin{theorem}\label{thm:gen}(From \cite[Theorem 5]{Gursoy2014}) 
Let $C=v C_{1} \oplus (v+1) C_{2}$ be a skew cyclic code of
length $n$ over $R$. Let $g_{1}(X)$ and $g_{2}(X)$ be the respective generator polynomials of $C_{1}$ and $C_{2}$ as $\F_4$-skew cyclic codes. Then $C=\langle v g_{1}(X) + 
(v+1) g_{2}(X) \rangle$.
\end{theorem}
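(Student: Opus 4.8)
The plan is to work inside the ambient ring $R_{n} := R[X,\theta]/\langle X^{n}-1\rangle$, in which a length-$n$ skew cyclic code over $R$ is precisely a left $R[X,\theta]$-submodule: closure under $T_{\theta}$ corresponds to closure under left multiplication by $X$ (since $X\cdot c(X)$ reproduces $T_{\theta}(\bc)$ modulo $X^{n}-1$), and the $R$-submodule property gives closure under left multiplication by scalars. I identify each codeword with its polynomial representative, so that $C = \{v\, u(X) + (v+1)\, w(X) : u(X) \in C_{1},\ w(X) \in C_{2}\}$, and I use repeatedly that $v$ and $v+1$ are orthogonal idempotents with $v^{2}=v$, $(v+1)^{2}=v+1$, $v(v+1)=0$, and $v+(v+1)=1$. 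Write $g := v g_{1}(X) + (v+1) g_{2}(X)$. The goal is the two inclusions $\langle g\rangle \subseteq C$ and $C \subseteq \langle g\rangle$.

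For the first inclusion it is enough to check that $g \in C$, after which $\langle g\rangle \subseteq C$ is automatic because $C$ is a left submodule. Since $g_{1}(X)$ generates $C_{1}$ it lies in $C_{1}$, and likewise $g_{2}(X) \in C_{2}$; hence $g$ already has the required form and lies in $v C_{1} \oplus (v+1) C_{2} = C$.

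For the reverse inclusion, the first step is to recover the two component generators from $g$ alone. Multiplying on the left by the idempotents and using orthogonality, $v\, g = v^{2} g_{1}(X) + v(v+1) g_{2}(X) = v g_{1}(X)$ and symmetrically $(v+1)\, g = (v+1) g_{2}(X)$; here the scalars act on the left with no power of $X$ to commute through, so the twist $\theta$ plays no role and these identities are literal. Thus $v g_{1}(X)$ and $(v+1) g_{2}(X)$ both lie in $\langle g\rangle$. Next I would invoke that $C_{1} = \langle g_{1}(X)\rangle$ and $C_{2} = \langle g_{2}(X)\rangle$ as skew cyclic codes over $\F_{4}$: every $u(X) \in C_{1}$ equals a left multiple $a(X) g_{1}(X)$, and the claim to establish is that $v\, u(X) \in \langle g\rangle$, with the analogous statement for $(v+1) C_{2}$. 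Granting this, $v C_{1} \subseteq \langle g\rangle$ and $(v+1) C_{2} \subseteq \langle g\rangle$, whence $C = v C_{1} \oplus (v+1) C_{2} \subseteq \langle g\rangle$, and the two inclusions give equality.

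The hard part is exactly the claim that $v\, a(X) g_{1}(X) \in \langle g\rangle$ for every $a(X)$, and this is where the noncommutativity must be handled rather than waved through. Because $\theta(v) = v+1 \neq v$, the idempotent $v$ is \emph{not} central in $R[X,\theta]$, so one cannot simply slide $a(X)$ past $v$ and reduce to $a(X)\cdot v g_{1}(X)$. The clean bookkeeping is to track the commutation rule $X^{i} v = \theta^{i}(v) X^{i}$, together with the fact that $\theta^{2}$ is the identity on $R$ so that $\theta^{i}(v) = v$ for even $i$ and $\theta^{i}(v) = v+1$ for odd $i$, and then to use the annihilation relations $v\,\theta^{i}(v) \in \{v,0\}$ to isolate the $v$-component after a left multiplication. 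Verifying that left multiples of $v g_{1}(X)$, taken together with those of $(v+1) g_{2}(X)$, nonetheless sweep out all of $v C_{1}$ and $(v+1) C_{2}$ is the technical heart of the argument and the step I would write out in full; everything else is the formal manipulation of the idempotent splitting.
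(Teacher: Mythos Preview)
The paper supplies no proof of its own here; it remarks that Theorems~2.1 and~2.2 ``can be inferred by a slight modification on the corresponding theorems in~\cite{Gursoy2014}'' and omits the argument. In that reference the automorphism fixes $v$, so $v$ is central in $R[X,\theta]$ and the outline you give goes through verbatim: from $vg_1\in\langle g\rangle$ one obtains $vC_1\subseteq\langle g\rangle$ simply by sliding any left multiplier past $v$. You correctly flag that for the $\theta$ of \emph{this} paper, which sends $v\mapsto v+1$, that commutation fails. The trouble is that the step you defer is not merely tedious bookkeeping; in the form you describe it does not close. Left multiples of $vg_1(X)$ alone cannot sweep out $vC_1$: already $X\cdot vg_1(X)=\theta(v)Xg_1(X)=(v+1)\,Xg_1(X)$ lands entirely in the $(v+1)$-component, and in general the $v$-part of $r(X)\cdot vg_1(X)$ is built only from the even-degree terms of $r$ while the odd-degree terms feed the $(v+1)$-part. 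So one never isolates an arbitrary $v\,a(X)g_1(X)$ this way, and the symmetric statement for $(v+1)g_2$ fails likewise.

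The repair is to stop trying to hit $vc_1$ and $(v+1)c_2$ separately and instead exhibit a single multiplier for the whole codeword, after first using the hypothesis more fully. Because $\theta$ swaps $v$ and $v+1$, one computes $T_\theta\bigl(v\,c_1+(v+1)\,c_2\bigr)=v\,T_{\theta|_{\F_4}}(c_2)+(v+1)\,T_{\theta|_{\F_4}}(c_1)$, so $C$ being $R$-skew cyclic forces $T_{\theta|_{\F_4}}(C_1)=C_2$ and $T_{\theta|_{\F_4}}(C_2)=C_1$. Combined with the stated hypothesis that $C_1$ and $C_2$ are themselves $\F_4$-skew cyclic, this yields $C_1=C_2$ and hence $g_1=g_2$, so that $g=g_1\in\F_4[X]$. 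Now take any $c=vc_1+(v+1)c_2\in C$ with $c_i=p_i(X)\,g_1(X)$ in $\F_4[X,\theta|_{\F_4}]$, and set $r(X)=v\,p_1(X)+(v+1)\,p_2(X)\in R[X,\theta]$. Since every coefficient of $g_1$ lies in $\F_4$, each product $X^i g_1(X)$ in $R[X,\theta]$ coincides with its $\F_4[X,\theta|_{\F_4}]$ counterpart, and expanding termwise gives $r(X)\,g(X)=v\,(p_1g_1)+(v+1)\,(p_2g_1)=c$. This establishes $C\subseteq\langle g\rangle$ and completes the argument.
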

	
For any element in $R$, we introduce a new ring homomorphism 
\begin{equation}\label{eq:ringhomo}
\eta : R \mapsto \F_{4} \mbox{ sending } a+vb \mbox{ to } a.
\end{equation}
Let $\F_{4}R := \{(a,b) \mid a \in \F_{4} \mbox{ and } b\in
R\}$. It is straightforward to verify that $\F_{4}R$ is an $R$-module under the multiplication 
\begin{equation}  \label{eq:ast}
d \ast (a,b)=(\eta ( d) a,db) \mbox{ with } d \in R \mbox{ and } (a,b)\in 
\F_{4}R.
\end{equation}
This extends naturally to $\F_{4}^{\alpha } R^{\beta}$. Let $\mathbf{x}=(a_{0},a_{1},...,a_{\alpha -1},b_{0},b_{1},...,b_{\beta-1})\in \F_{4}^{\alpha }R^{\beta}$, for $\alpha$ and $\beta \in \mathbb{N}$, and $d\in R$. Then 
\begin{equation}  \label{eq:scaprod}
d \ast \mathbf{x} = (\eta ( d) a_{0},\eta ( d) a_{1},...,\eta ( d) a_{\alpha-1},d b_{0},d b_{1},...,d b_{\beta -1}).
\end{equation}
	
\begin{definition}
A nonempty subset $C$ of $\F_{4}^{\alpha }R^{\beta }$ is called an 
{\it $\F_{4}R$-linear code} if it is an $R$-submodule of $\F_{4}^{\alpha }R^{\beta}$ with respect to the scalar multiplication in Equation (\ref{eq:scaprod}).
\end{definition}
A nondegenerate \textit{inner product} between $\mathbf{x}=(a_{0},a_{1},\ldots,a_{\alpha-1}, b_{0},b_{1},\ldots,b_{\beta-1})$ and 
$\mathbf{y}=(d_{0},d_{1},\ldots,d_{\alpha-1}, e_{0},e_{1},...,e_{\beta -1})$
is given by 
\begin{equation}  \label{eq:inprod}
\langle \mathbf{x}, \mathbf{y}\rangle =v\sum_{i=0}^{\alpha-1}a_{i}d_{i}+\sum_{j=0}^{\beta -1}b_{j}e_{j}\in R.
\end{equation}
The dual code of an $\F_4 R$-linear code $C$, denoted by $C^{\perp }$, is also $\F_{4}R$-linear and is defined in the usual way as 
\begin{equation*}
C^{\perp }:=\{\mathbf{y} \in \F_{4}^{\alpha }R^{\beta } \mid \langle 
\mathbf{x}, \mathbf{y} \rangle = 0 \mbox{ for all } \mathbf{x} \in C\}.
\end{equation*}
	
Let 
\begin{align*}
a(X) & =a_{0} + a_{1} X + \ldots + a_{\alpha -1} X^{\alpha -1} \in \F_{4}[X]/\langle X^{\alpha }-1\rangle \mbox{ and } \\
b(X) & = b_{0} + b_{1} X + \ldots + b_{\beta-1} X^{\beta -1} \in
R[X,\theta]/\langle X^{\beta }-1\rangle.
\end{align*}
Then any codeword $\bc=(a_{0},a_{1},...,a_{\alpha-1}, b_{0},b_{1},...,b_{\beta -1})\in \F_{4}^{\alpha }R^{\beta }$ can be
identified with a module element consisting of two polynomials such that 
\begin{equation}  \label{eq:c}
c(X)=(a(X),b(X)).
\end{equation}
This identification gives a one-to-one correspondence between $\F_{4}^{\alpha }R^{\beta }$ and 
\begin{equation}  \label{eq:corresp}
R_{\alpha ,\beta } \triangleq \F_{4}[X]/\langle X^{\alpha }-1\rangle
\times R[X,\theta ]/\langle X^{\beta }-1\rangle.
\end{equation}
Let $r(X)=r_{0} + r_{1} X+ \ldots + r_{t} X^{t} \in R[X,\theta]$ and $(a(X),b(X)) \in R_{\alpha ,\beta}$. Their product is 
\begin{equation}  \label{eq:mult}
r(X) \ast (a(X),b(X))=(\eta (r(X)) a(X) , r(X)b(X))
\end{equation}
where $\eta(r(X))= \eta ( r_{0} ) + \eta ( r_{1} ) X + \ldots + \eta (r_{t})
X^{t} \in \F_4[X]$. Here, $\eta(r(X))a(X)$ is the usual polynomial
multiplication in $\F_{4}[X]/\langle X^{\alpha }-1\rangle$ while $r(X) b(X)$ is the polynomial multiplication in $R[X,\theta ]/\langle X^{\beta }-1\rangle$ where $X (a + v b) = \left( a^2 + (v+1) b^2 \right) X$.
	
\begin{theorem}
$R_{\alpha,\beta}$ is a left $R[X,\theta ]$-module with respect to $\ast$ in Equation~(\ref{eq:mult}).
\end{theorem}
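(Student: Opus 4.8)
The plan is to establish the module structure by viewing $R_{\alpha,\beta}$ as a product of two pieces, each carrying a natural left $R[X,\theta]$-action, and then verifying the four module axioms coordinatewise. Conceptually, the cleanest route is to show that the coefficientwise extension $\bar\eta\colon R[X,\theta]\to\F_4[X]$, $\sum_i r_i X^i\mapsto \sum_i \eta(r_i)X^i$, is a ring homomorphism; granting this, the first coordinate $\F_4[X]/\langle X^\alpha-1\rangle$ becomes a left $R[X,\theta]$-module by restriction of scalars along $\bar\eta$ from its evident $\F_4[X]$-module structure, while the second coordinate is a left $R[X,\theta]$-module in its own right, and the product of the two is the asserted module.

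First I would dispatch the routine axioms. The set $R_{\alpha,\beta}$ is an abelian group under componentwise addition, and $\ast$ is additive in each argument because polynomial multiplication is bilinear in both factors and $\eta$ is additive; the unit axiom $1\ast(a(X),b(X))=(a(X),b(X))$ is immediate from $\eta(1)=1$. In the second coordinate the associativity axiom $(r(X)s(X))\ast\bullet = r(X)\ast(s(X)\ast\bullet)$ is nothing more than associativity of multiplication in $R[X,\theta]$ pushed to the quotient: since $\langle X^\beta-1\rangle$ is a left ideal, left multiplication descends and makes $R[X,\theta]/\langle X^\beta-1\rangle$ a left $R[X,\theta]$-module, so these axioms come for free (if one further wants the quotient to be a ring one needs $X^\beta-1$ central, i.e.\ $\theta^\beta=\mathrm{id}$, but that is not required for the module structure).

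The substance of the theorem lies in the first coordinate, where the associativity axiom reduces to the identity $\eta(r(X)s(X))=\eta(r(X))\,\eta(s(X))$ in $\F_4[X]$. Expanding the skew product gives $r(X)s(X)=\sum_k\big(\sum_{i+j=k} r_i\,\theta^i(s_j)\big)X^k$, so after applying $\eta$ coefficientwise the claimed multiplicativity hinges entirely on the behaviour of $\eta$ on the twisted coefficients $\theta^i(s_j)$. This is the step I expect to be the main obstacle and the one I would scrutinize first: since $\eta$ need not commute with $\theta$ on $R$, I would verify the required compatibility explicitly on the generators $1$ and $v$ (using $\theta^2=\mathrm{id}$, so that only $i\bmod 2$ matters) before trusting the first-coordinate computation. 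Once this compatibility is settled, $\bar\eta$ is a ring homomorphism, the two coordinatewise actions assemble, and all module axioms hold, completing the proof.
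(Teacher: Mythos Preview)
Your structural approach is exactly right, and you have correctly isolated the only nontrivial verification: whether the coefficientwise extension $\bar\eta\colon R[X,\theta]\to\F_4[X]$ is multiplicative, which is precisely what associativity in the first coordinate requires. You even flag this as the step to scrutinize before proceeding. The trouble is that the check you propose \emph{fails}. From $\theta(a+vb)=a^2+(v+1)b^2=(a^2+b^2)+vb^2$ and $\eta(a+vb)=a$ one gets $\eta(\theta(v))=\eta(v+1)=1$ while $\eta(v)=0$, so $\eta\circ\theta\neq\eta$. Hence $\bar\eta$ is not a ring homomorphism: taking $r(X)=X$ and $s(X)=v$ one has $r(X)s(X)=\theta(v)X=(v+1)X$ in $R[X,\theta]$, whence $\eta\big(r(X)s(X)\big)=X$, whereas $\eta(r(X))\,\eta(s(X))=X\cdot 0=0$.

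This is not a patchable gap in your write-up; it breaks the associativity axiom for $\ast$ itself. With $(a(X),b(X))=(1,0)\in R_{\alpha,\beta}$ one computes
\[
(X\cdot v)\ast(1,0)=\big((v+1)X\big)\ast(1,0)=(X,0),
\qquad
X\ast\big(v\ast(1,0)\big)=X\ast(0,0)=(0,0),
\]
so $(r(X)s(X))\ast c\neq r(X)\ast(s(X)\ast c)$. Thus the statement, as written with this $\eta$ and this $\theta$, is false, and no argument along your lines (or any other) can succeed. The paper's own proof sketch does not help: it dismisses the first coordinate as ``easy since we do not have to deal with skewness'', overlooking that the skewness enters through the product $r(X)s(X)$ computed in $R[X,\theta]$ \emph{before} $\eta$ is applied. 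Your instinct to test the compatibility on $v$ was the right one; carrying it out exposes the obstruction rather than settling it.
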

	
\begin{proof}
Verifying that the required properties are satisfied over $\F_{4}[X]/\langle X^{\alpha }-1\rangle$ is easy since we do not have to deal with skewness. Verifying over $R[X,\theta ]/\langle X^{\beta }-1\rangle$ is routine, albeit tedious. It suffices to use the facts that $\theta$ is a homomorphism with $\theta^{-1}=\theta$.
\end{proof}

\section{Generator Polynomials of $\F_{4} R$-Skew Cyclic Codes}
	
This section begins with a formal definition of an $\F_4R$-skew cyclic code and proposes a method to determine the generator polynomial of any $\F_{4}R$-skew cyclic code $C$ in $R_{\alpha,\beta}$. We say that two codes are \emph{equivalent} if one can be obtained from the other by some composition of a permutation of the first $\alpha$ positions, a permutation of the last $\beta$ positions, and multiplication of the symbols appearing in a chosen position by a nonzero scalar.

\begin{definition}
An $\F_4 R$-linear code $C$ of length $n=\alpha+\beta$ is said to be {\it $\F_{4}R$-skew cyclic} if, for any codeword $\bc=(a_{0},a_{1},\ldots,a_{\alpha -1},b_{0},b_{1}, \ldots, b_{\beta -1})\in C$, its skew cyclic shift 
$T_{\theta }(\bc) \triangleq (a_{\alpha -1},a_{0},\ldots,a_{\alpha -2},\theta
(b_{\beta -1}),\theta (b_{0}),\ldots,\theta (b_{\beta -2}))$ is also in $C$.
\end{definition}
	
\begin{theorem}
Let $C$ be an $\F_{4}R$-skew cyclic code of length $n=\alpha +\beta $
such that $\beta$ is an even integer. Then $C^{\perp }$ is also an $\F_{4}R$-skew cyclic code of the same length.
\end{theorem}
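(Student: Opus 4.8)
The plan is to leverage the fact, stated just before the theorem, that $C^{\perp}$ is already $\F_{4}R$-linear; hence the only thing left to establish is that $C^{\perp}$ is closed under the skew cyclic shift $T_{\theta}$. I would fix $\by=(d_{0},\ldots,d_{\alpha-1},e_{0},\ldots,e_{\beta-1})\in C^{\perp}$ and an arbitrary $\bx=(a_{0},\ldots,a_{\alpha-1},b_{0},\ldots,b_{\beta-1})\in C$, and aim for $\langle \bx,T_{\theta}(\by)\rangle=0$. The classical route is to write the shift of $\by$ as an adjoint of an inverse shift of $\bx$ and invoke closure of $C$ under that inverse shift. The subtlety is that $T_{\theta}$ is \emph{not} an isometry for the $R$-valued form of Equation~(\ref{eq:inprod}): the $\F_{4}$-block carries the weight $v$ while the $R$-block carries a $\theta$-twist. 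I would therefore split the form as $\langle \bx,\by\rangle=v\,S_{1}(\bx,\by)+S_{2}(\bx,\by)$ with $S_{1}=\sum_{i}a_{i}d_{i}\in\F_{4}$ and $S_{2}=\sum_{j}b_{j}e_{j}\in R$, and track the two blocks separately.

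First I would prove a \emph{separation lemma}: for $\bx\in C$ and $\by\in C^{\perp}$ one has $S_{1}(\bx,\by)=0$ and $S_{2}(\bx,\by)=0$ individually. Since $C$ is an $R$-submodule, $d\ast\bx\in C$ for every $d\in R$, so $\langle d\ast\bx,\by\rangle=v\,\eta(d)\,S_{1}+d\,S_{2}=0$. Taking $d=1$ gives $vS_{1}+S_{2}=0$, hence $S_{2}=vS_{1}$; taking $d=v$ gives $vS_{2}=0$ because $\eta(v)=0$. Combining these, $vS_{1}=v^{2}S_{1}=vS_{2}=0$, and since $v$ annihilates no nonzero element of $\F_{4}$ we obtain $S_{1}=0$ and then $S_{2}=0$. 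This is the step that genuinely uses the idempotent structure of $R$ through the projection $\eta$.

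Next I would record that $C$ is closed under $T_{\theta}^{-1}$. Here the hypothesis that $\beta$ is even enters: because $\theta^{2}=\mathrm{id}$, evenness of $\beta$ makes $\lcm(\alpha,\beta)$ even and a multiple of both $\alpha$ and $\beta$, so $T_{\theta}^{\lcm(\alpha,\beta)}=\mathrm{id}$, the $\F_{4}$-block being cyclically restored and the $R$-block restored with the accumulated $\theta^{\lcm(\alpha,\beta)}=\mathrm{id}$. Consequently $T_{\theta}^{-1}$ is a positive power of $T_{\theta}$, and closure of $C$ under $T_{\theta}$ forces closure under $T_{\theta}^{-1}$.

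Finally I would compute the two blocks of $\langle \bx,T_{\theta}(\by)\rangle$ and relate them to $\langle T_{\theta}^{-1}(\bx),\by\rangle$. A direct reindexing shows $S_{1}(\bx,T_{\theta}\by)=S_{1}(T_{\theta}^{-1}\bx,\by)$, while for the twisted block $\theta\!\left(S_{2}(\bx,T_{\theta}\by)\right)=S_{2}(T_{\theta}^{-1}\bx,\by)$, the single surviving $\theta$ being absorbed via $\theta^{2}=\mathrm{id}$. Since $T_{\theta}^{-1}\bx\in C$ and $\by\in C^{\perp}$, the separation lemma makes both blocks of $\langle T_{\theta}^{-1}\bx,\by\rangle$ vanish; hence $S_{1}(\bx,T_{\theta}\by)=0$ and $\theta(S_{2}(\bx,T_{\theta}\by))=0$, and injectivity of $\theta$ gives $S_{2}(\bx,T_{\theta}\by)=0$. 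Therefore $\langle \bx,T_{\theta}(\by)\rangle=0$ for all $\bx\in C$, so $T_{\theta}(\by)\in C^{\perp}$. I expect the main obstacle to be precisely the non-isometry caused by the $\theta$-twist and the weight $v$: the crux is proving the separation lemma so the two blocks can be handled independently, and then keeping the bookkeeping exact so that exactly one $\theta$ survives on the $R$-block (cleared by $\theta^{2}=\mathrm{id}$) and none on the $\F_{4}$-block.
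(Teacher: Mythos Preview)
Your proof is correct and follows the same overall architecture as the paper: set $\gamma=\lcm(\alpha,\beta)$, note that $\gamma$ is even so $T_{\theta}^{\gamma}=\mathrm{id}$ and hence $T_{\theta}^{-1}=T_{\theta}^{\gamma-1}$ preserves $C$, then relate the blocks of $\langle \bx,T_{\theta}(\by)\rangle$ to those of $\langle T_{\theta}^{-1}(\bx),\by\rangle$ and clear the stray $\theta$ on the $R$-block via $\theta^{2}=\mathrm{id}$.

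The one substantive difference is your separation lemma. Having obtained $vS_{1}+S_{2}=0$ from orthogonality, the paper simply asserts that ``this implies'' $S_{1}=0$ and $S_{2}=0$ individually. Taken at face value that implication is false for arbitrary $S_{1}\in\F_{4}$ and $S_{2}\in R$ (e.g.\ $S_{1}=1$, $S_{2}=v$). Your lemma supplies the missing justification: because $C$ is an $R$-submodule, orthogonality holds against every $d\ast\bx$, giving $v\,\eta(d)S_{1}+dS_{2}=0$ for all $d\in R$; specializing to $d=1$ and $d=v$ (using $\eta(v)=0$) then forces $S_{1}=0$ and $S_{2}=0$. So your argument is the paper's argument made rigorous at the one point where the paper skips a nontrivial step.
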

	
\begin{proof}
It suffices to show that, for any  
$\bx=(a_{0},a_{1},\ldots,a_{\alpha -1},b_{0},b_{1},\ldots,b_{\beta -1})\in C^{\perp }$, we have $T_{\theta}(\bx)\in C^{\perp }$. Let   
$\by=(d_{0},d_{1},\ldots,d_{\alpha -1},e_{0},e_{1},\ldots,e_{\beta -1})$ be any codeword in $C$. Then  
\begin{multline*}
\inp {T_{\theta} (\bx)}{\by} = \\
\langle ( a_{\alpha -1},a_{0},\ldots,a_{\alpha -2},\theta (b_{\beta-1}),\theta (b_{0}),\ldots,\theta (b_{\beta -2})),(d_{0},d_{1},\ldots,d_{\alpha -1},e_{0},e_{1}, \ldots , e_{\beta -1})\rangle \\
 =v(a_{\alpha -1}d_{0}+a_{0}d_{1}+\ldots+a_{\alpha -2}d_{\alpha -1})+(\theta(b_{\beta-1})e_{0}+\theta (b_{0})e_{1}+\ldots+
\theta (b_{\beta -2})e_{\beta-1}).
\end{multline*}
Hence, one only needs to show that 
\[
0 = a_{\alpha -1}d_{0}+a_{0}d_{1}+\ldots+a_{\alpha-2}d_{\alpha -1} \mbox{ and } 
0 = \theta (b_{\beta -1})e_{0}+\theta
(b_{0})e_{1}+\ldots+\theta (b_{\beta -2})e_{\beta -1}.
\]
Now, let $\gamma :=\lcm(\alpha ,\beta )$. Then $\gamma$ is an even integer since $\beta$ is an even integer. Since $C$ is $\F_{4}R$-skew cyclic, for any $\by \in C$ we have $T_{\theta }^{\gamma}(\by) = \by$ and $T_{\theta }^{\gamma -1}(\by) \in C$. Hence, $\inp{\bx}{T_{\theta }^{\gamma -1} (\by)} =0$. 
Since $T_{\theta }^{\gamma -1} (\by) = ( d_{1},\ldots,d_{\alpha -1},d_{0},\theta (e_{1}),\ldots,\theta(e_{\beta -1}),\theta (e_{0}))$, we then obtain 
\[
v \sum_{j=0}^{\alpha-1} a_j \, 
d_{(j+1) \Mod{\alpha}} + \sum_{j=0}^{\beta-1} b_j \, 
\theta \left(e_{(j+1) \Mod{\beta}}\right)=0.
\] 
This implies
\begin{align*}
0 & = a_{\alpha -1}d_{0}+a_{0}d_{1}+a_{1}d_{2}+\ldots+a_{\alpha-2}d_{\alpha -1} \mbox{ and }\\ 
0 & = b_{\beta -1}\theta (e_{0})+b_{0}\theta (e_{1})+b_{1}\theta(e_{2})+\ldots+b_{\beta -2}\theta (e_{\beta -1}).
\end{align*}
Applying $\theta$ to both sides of the last equation yields
\[
\theta(0)=\theta (b_{\beta -1})e_{0}+\theta (b_{0})e_{1}+\theta(b_{1})e_{2}+\ldots+\theta (b_{\beta -2})e_{\beta -1}=0,
\]
completing the proof. 
\end{proof}
	
\begin{theorem}
A code $C$ is $\F_{4}R$-skew cyclic if and only if $C$ is a left 
$R[X,\theta]$-submodule of $R_{\alpha ,\beta }$ under the multiplication $\ast$.
\end{theorem}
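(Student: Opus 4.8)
The plan is to isolate the single key identity that multiplication by the indeterminate $X$ under $\ast$ realizes exactly the skew cyclic shift $T_\theta$, and then bootstrap from it to the action of an arbitrary skew polynomial. For $c(X)=(a(X),b(X))$, I would first compute $X \ast c(X)=(\eta(X)\,a(X),\,X\,b(X))$. Since $\eta$ fixes the indeterminate, $\eta(X)=X$, so the first component is ordinary multiplication $X\,a(X)$ in $\F_4[X]/\langle X^\alpha-1\rangle$; reducing modulo $X^\alpha-1$ sends $(a_0,\ldots,a_{\alpha-1})$ to $(a_{\alpha-1},a_0,\ldots,a_{\alpha-2})$, the cyclic shift of the first block. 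For the second component the skew rule $Xs=\theta(s)X$ for $s\in R$ gives $X\,b(X)=\sum_{j}\theta(b_j)X^{j+1}$, and reducing modulo $X^\beta-1$ yields $(\theta(b_{\beta-1}),\theta(b_0),\ldots,\theta(b_{\beta-2}))$, which is precisely the skew cyclic shift of the last block. Thus $X \ast c(X)$ is the polynomial identification of $T_\theta(\bc)$.

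With this identity in hand, the two implications are short. For the direction asserting that a left $R[X,\theta]$-submodule is $\F_4R$-skew cyclic: such a $C$ is closed under the action of the constants $r\in R\subset R[X,\theta]$, which is exactly the $\F_4R$-module action of Equation~(\ref{eq:scaprod}), so $C$ is $\F_4R$-linear; and $C$ is closed under multiplication by $X$, which by the key identity means $T_\theta(\bc)\in C$ for every $\bc\in C$. Hence $C$ is $\F_4R$-skew cyclic.

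For the converse, suppose $C$ is $\F_4R$-skew cyclic. Then $C$ is an $R$-submodule (closed under addition and the $\ast$-action of $R$) and is closed under $T_\theta$, hence under $X^i \ast$ for every $i\ge 0$ by iteration. The remaining task is closure under an arbitrary $r(X)=\sum_{i=0}^{t}r_i X^i\in R[X,\theta]$, for which I would verify the decomposition
\[
r(X)\ast c(X)=\sum_{i=0}^{t} r_i \ast \left( X^i \ast c(X) \right),
\]
matching components: the first gives $\sum_i \eta(r_i)X^i a(X)=\eta(r(X))a(X)$ and the second gives $\sum_i r_i\,(X^i b(X))=r(X)b(X)$, using left distributivity of the skew multiplication and the fact that the twist is already absorbed into $X^i b(X)$. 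Each summand $r_i \ast (X^i \ast c(X))$ lies in $C$ because $X^i \ast c(X)\in C$ and $C$ is closed under the $R$-action, and closure under addition completes the argument.

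The main obstacle is the second component of the key identity: one must carefully track how the noncommutative rule $Xs=\theta(s)X$ threads through the coefficients and how the relation $X^\beta=1$ wraps the top coefficient around with $\theta$ applied, so that the output matches $T_\theta$ exactly rather than producing an off-by-one twist. Once that computation is pinned down, the decomposition of $r(X)\ast c(X)$ and both closure arguments are routine consequences of the module axioms already recorded for $R_{\alpha,\beta}$.
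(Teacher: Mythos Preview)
Your proposal is correct and follows essentially the same approach as the paper: both hinge on the identification of $X^j \ast c(X)$ with the $j$-th skew cyclic shift $T_\theta^j(\bc)$ and then invoke $R$-linearity to handle general $r(X)$. Your version is more explicit in isolating the base case $j=1$, in spelling out the decomposition $r(X)\ast c(X)=\sum_i r_i\ast(X^i\ast c(X))$, and in noting that closure under constants gives the $\F_4R$-linearity needed in the converse, whereas the paper compresses these into one sentence appealing to ``the $\F_4R$-linearity of $C$.''
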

	
\begin{proof}
Let $c(X)=(a(X),b(X))$ be any codeword of an $\F_{4}R$-skew cyclic code $C$. Hence, $(a_{0},a_{1},\ldots,a_{\alpha-1},b_{0},b_{1},\ldots,b_{\beta-1})$ and all of it's $T_{\theta}$-skew cyclic shifts are in $C$. We associate, for each $j \in \N$, the polynomial
\begin{multline*}
X^{j}\ast c(X) =(a_{\alpha-j} + a_{\alpha-j+1} X + \ldots +
a_{\alpha-j-1} X^{\alpha -1}, \\
\theta^{j} (b_{\beta-j}) + \theta^{j} (b_{\beta-j + 1}) X + \ldots + \theta^{j} (b_{\beta-j-1}) X^{\beta-1})
\end{multline*}
with the vector 
\[
(a_{\alpha-j},a_{\alpha-j+1},\ldots,a_{\alpha-j-1}, 
\theta^{j}(b_{\beta-j}),\theta^{j}(b_{\beta-j+1}),\ldots,\theta^{j}(b_{\beta-j-1})).
\]
The indices of the first block (of length $\alpha$) are taken modulo $\alpha $ and those of the second block (of length $\beta$) are taken modulo $\beta$. By the $\F_4 R$-linearity of $C$, we have $r(X) \ast c(X) \in C$ for any $r(X) \in R[X,\theta]$. Thus, $C$ is a left $R[X,\theta ]$-submodule of $R_{\alpha,\beta}$.
		
Conversely, let $C$ be a left $R[X,\theta]$-submodule of the left $R[X,\theta]$-module $R_{\alpha,\beta }$. Then, for any $c(X) \in C$, we have $X^j \ast c(X) \in C$ for any $j \in \N$. Thus, $C$ is indeed an $\F_{4}R$-skew cyclic code. 
\end{proof}
	
Let $C$ be an $\F_{4}R$-skew cyclic code. Let $c(X)=(a(X),b(X))$ be an element in $C$. Let $\ell(X)$ be an element in $\F_{4}[X]/ \langle X^{\alpha}-1\rangle$. We use $\mathbf{0}$ to denote either the zero vector $(0,0,\ldots,0)$ or the zero polynomial. Let 
\begin{align*}
I & \triangleq \{b(X)\in R[X,\theta] /\langle X^{\beta}-1\rangle \mid
(\ell(X),b(X))\in C \} \mbox{ and } \\
J & \triangleq \{a(X)\in \F_{4}[X]/\langle X^{\alpha}-1\rangle \mid
(a(X),\mathbf{0}) \in C\}.
\end{align*}
The next results establish useful properties of the sets $I$ and $J$
	
\begin{lemma}\label{lem:J}
$J$ is an ideal in $\F_{4}[X]/\left\langle X^{\alpha }-1 \right \rangle$ generated by a left divisor of $X^{\alpha}-1$.
\end{lemma}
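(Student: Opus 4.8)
The plan is to show $J$ satisfies the two defining properties of an ideal in $\F_4[X]/\langle X^\alpha -1\rangle$—closure under addition and closure under multiplication by arbitrary ring elements—and then invoke the structure theorem for ideals in this polynomial quotient ring to conclude that $J$ is generated by a left divisor of $X^\alpha-1$. The key observation throughout is that $J$ consists of the first-block polynomials $a(X)$ for which the codeword $(a(X),\0)$ lies in $C$, so every membership test reduces to checking whether a suitably modified pair is still in the $R[X,\theta]$-submodule $C$.

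**First I would verify closure under addition.** If $a(X),a'(X)\in J$, then $(a(X),\0)$ and $(a'(X),\0)$ are both in $C$; since $C$ is an $R$-submodule (indeed a left $R[X,\theta]$-submodule) of $R_{\alpha,\beta}$, their sum $(a(X)+a'(X),\0)$ belongs to $C$, so $a(X)+a'(X)\in J$. **Next I would verify the ideal (absorption) property.** Let $a(X)\in J$ and let $f(X)\in \F_4[X]/\langle X^\alpha-1\rangle$ be arbitrary. The crucial point is that, although the scalars act via the ring $R[X,\theta]$ through the map $\ast$ of Equation~(\ref{eq:mult}), I can realize multiplication by $f(X)$ on the first block by choosing a lift $r(X)\in R[X,\theta]$ with $\eta(r(X))=f(X)$; the obvious choice is $r(X)=f(X)$ itself, viewing the $\F_4$-coefficients inside $R$, so that $\eta$ fixes them. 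Then by Equation~(\ref{eq:mult}),
\[
r(X)\ast (a(X),\0)=(\eta(r(X))\,a(X),\,r(X)\cdot\0)=(f(X)\,a(X),\,\0),
\]
which lies in $C$ because $C$ is a left $R[X,\theta]$-submodule. Hence $f(X)\,a(X)\in J$, establishing absorption. Together with closure under addition and the obvious fact that $\0\in J$, this shows $J$ is an ideal of $\F_4[X]/\langle X^\alpha-1\rangle$.

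**Finally,** since $\F_4[X]/\langle X^\alpha-1\rangle$ is a principal ideal ring (every ideal is generated by a single divisor of $X^\alpha-1$), the ideal $J$ is generated by the monic polynomial $g(X)$ of least degree in $J$, and a standard division-algorithm argument forces $g(X)\mid X^\alpha-1$. Because the first block carries the ordinary commutative multiplication in $\F_4[X]$, the phrase ``left divisor'' coincides with an ordinary divisor here; I would state it as a left divisor for uniformity with the skew setting of the second block.

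**The main obstacle** I anticipate is purely notational rather than conceptual: one must be careful that the action on $J$ is the $\ast$-action inherited from $R[X,\theta]$ and not a naive $\F_4[X]$-action, so the verification of absorption genuinely passes through a lift $r(X)$ under $\eta$ and relies on $\eta$ being a ring homomorphism fixing $\F_4\subseteq R$. Once this is set up correctly, the skewness of $\theta$ plays no role in the first block, and the argument is a routine transcription of the classical proof that the first coordinates of a cyclic code form an ideal.
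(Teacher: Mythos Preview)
Your proposal is correct and follows essentially the same route as the paper's proof: verify closure under addition, verify absorption by acting with a polynomial via the $\ast$-multiplication, and then invoke the principal ideal structure of $\F_4[X]/\langle X^\alpha-1\rangle$. Your added care in explicitly lifting $f(X)$ to $r(X)\in R[X,\theta]$ with $\eta(r(X))=f(X)$ makes precise a step the paper leaves implicit, but the underlying argument is identical.
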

	
\begin{proof}
Let $a_{1}(X)$ and $a_{2}(X)$ be elements of $J$. By definition, $(a_{1}(X),\0)$ and $(a_{2}(X),\0)$ are in $C$. Hence, $(a_{1}(X),\0) + (a_{2}(X) ,\0) = (a_{1}(X) + a_{2}(X),\0) \in C$, making $a_{1}(X) +a_{2}(X) \in J$. Let $s(X) \in \F_{4}[X]/\left\langle X^{\alpha }-1 \right\rangle$ and $a(X) \in J$. Then $(a(X),\0)$ is in $C$. Because $C$ is a left $R[X,\theta]$-module, we have 
\[
s(X) \ast (a(X),\0)=(s(X) a(X),\0) \in C \implies s(X) a(X) \mbox{ modulo } (X^{\alpha}-1) \in J.
\]
Thus, $J$ is an ideal in $\F_{4}[X]/\left\langle X^{\alpha}-1 \right\rangle$ generated by a left divisor $f(X)$ of $X^{\alpha}-1$.
\end{proof}
	
\begin{lemma}\label{lem:I}
$I$ is a principally generated left $R[X,\theta]$-submodule of $R[X,\theta]/\langle X^{\beta}-1 \rangle$.
\end{lemma}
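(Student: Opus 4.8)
The plan is to identify $I$ with an $R$-skew cyclic code of length $\beta$ and then quote the principal-generator structure already recorded in Theorem~\ref{thm:gen}. Throughout I read $I$ as the projection of $C$ onto its second block, so that $b(X)\in I$ exactly when $(\ell(X),b(X))\in C$ for some $\ell(X)\in\F_{4}[X]/\langle X^{\alpha}-1\rangle$; with this reading $I$ is a genuine subset of the ring $R[X,\theta]/\langle X^{\beta}-1\rangle$ rather than a coset of one.

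First I would verify that $I$ is a left $R[X,\theta]$-submodule. Additive closure is immediate: if $b_{1}(X),b_{2}(X)\in I$ are witnessed by codewords $(\ell_{1},b_{1}),(\ell_{2},b_{2})\in C$, then $\F_{4}R$-linearity of $C$ gives $(\ell_{1}+\ell_{2},\,b_{1}+b_{2})\in C$, so $b_{1}+b_{2}\in I$. For the action, let $r(X)\in R[X,\theta]$ and $b(X)\in I$ with $(\ell(X),b(X))\in C$. Because $C$ is a left $R[X,\theta]$-submodule of $R_{\alpha,\beta}$, we have
\[
r(X)\ast(\ell(X),b(X))=\bigl(\eta(r(X))\,\ell(X),\ r(X)\,b(X)\bigr)\in C,
\]
and reading off the second coordinate gives $r(X)\,b(X)\in I$. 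Since $R[X,\theta]$ is generated by the constant ring $R$ together with $X$, and $X^{\beta}=1$ in the quotient, this closure makes $I$ a left ideal of $R[X,\theta]/\langle X^{\beta}-1\rangle$.

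Next I would note that a left ideal of $R[X,\theta]/\langle X^{\beta}-1\rangle$ is precisely an $R$-skew cyclic code of length $\beta$, by the same correspondence between skew cyclic codes and left submodules established for $\F_{4}R$ above. Hence Theorem~\ref{thm:gen} applies directly: writing $I=vI_{1}\oplus(v+1)I_{2}$ through the Chinese Remainder decomposition $R=\langle v\rangle\times\langle v+1\rangle$, the components $I_{1}$ and $I_{2}$ are skew cyclic codes over $\F_{4}$ with generator polynomials $g_{1}(X)$ and $g_{2}(X)$, and $I=\langle v\,g_{1}(X)+(v+1)\,g_{2}(X)\rangle$. This exhibits $I$ as principally generated and finishes the argument.

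The step I expect to carry the real weight is the passage to a single generator. It is delivered by Theorem~\ref{thm:gen}, but ultimately rests on the left division algorithm in the skew polynomial ring $\F_{4}[X,\theta]$, which forces each $I_{k}$ to be generated by a single right divisor of $X^{\beta}-1$. A point demanding care is that $\theta$ does not fix the idempotents: a direct check gives $\theta(v)=v+1$ and $\theta(v+1)=v$, so the skew shift swaps the two Chinese Remainder components while applying the Frobenius on $\F_{4}$. One must therefore ensure that the recombination $v\,g_{1}(X)+(v+1)\,g_{2}(X)$ is compatible with this action, which is exactly what the hypothesis that $I_{1}$ and $I_{2}$ are skew cyclic encodes. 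Once the decomposition is arranged consistently, the closure of $I$ under $\ast$ is routine and the single generator follows.
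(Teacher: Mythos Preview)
Your proof is correct and follows essentially the same route as the paper: verify that $I$ is closed under addition and under left multiplication by $R[X,\theta]$ (so that $I$ is a left submodule, equivalently an $R$-skew cyclic code of length $\beta$), and then invoke Theorem~\ref{thm:gen} to produce the single generator $g(X)=v\,g_{1}(X)+(v+1)\,g_{2}(X)$. Your closing remark that $\theta$ swaps the idempotents $v$ and $v+1$ is a legitimate caution the paper does not pause over, but it does not change the structure of the argument, which in both cases simply defers the principal-generation step to Theorem~\ref{thm:gen}.
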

	
\begin{proof}
Let $b_{1}(X)$ and $b_{2}(X)$ be elements in $I$. Then there exist polynomials  $\ell_1(X)$ and $\ell_2(X)$ in $\F_4[X]/\langle X^{\alpha}-1 \rangle$ such that  $(\ell_{1}(X),b_{1}(X)),(\ell_{2}(X),b_{2}(X)) \in C$. Hence, 
\[(\ell_{1}(X),b_{1}(X)) +(\ell_{2}(X),b_{2}(X))=
(\ell_{1}(X) +\ell_{2}(X) ,b_{1}(X) +b_{2}(X)) \in C, 
\]
implying $b_{1}(X) +b_{2}(X) \in I$. Let $r(X)\in R[X,\theta]/\langle X^{\beta }-1 \rangle$ and $(\ell(X), b(X))\in C$. Since $C$ is a left $R[X,\theta]$-submodule of $R_{\alpha, \, \beta}$, we have  
\[
r(X) \ast (\ell(X),b(X)) = (\eta(r(X)) \ell(X) \mbox{ modulo} (X^{\alpha}-1), r(X) b(X) \mbox{ modulo } (X^{\beta}-1))
\]
in $C$, making 
$r(X) b(X) \mbox{ modulo } (X^{\beta}-1) \in I$. Thus, $I$ is a left submodule in 
$R[X,\theta]/ \langle X^{\beta }-1 \rangle$ and, by Theorem~\ref{thm:gen}, 
$I=\langle g(X) \rangle$ where 
\begin{equation}\label{eq:g}
g(X) \triangleq v g_{1} (X) + (v+1) g_{2}(X).
\end{equation}
\end{proof}
	
The following result classifies all $\F_4 R$-skew cyclic codes.
\begin{theorem}
Let $g(X)$ be as defined in Equation (\ref{eq:g}). Let $C$ be an $\F_{4}R$-skew cyclic code. Then $C$ is generated as a left submodule of $R_{\alpha,\beta}$ by $(f(X),\0)$ and $(\ell(X),g(X))$ where $\ell(X)$ is an element in $\F_{4}[X]/\langle X^{\alpha}-1 \rangle$ and $f(X)$ is a left divisor of $X^{\alpha}-1$.
\end{theorem}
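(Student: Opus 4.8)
The plan is to prove the equality by a double inclusion. Write $C' \triangleq \langle (f(X),\0),\,(\ell(X),g(X))\rangle$ for the left $R[X,\theta]$-submodule generated by the two stated elements, where $f(X)$ and $g(X)$ are supplied by Lemmas~\ref{lem:J} and~\ref{lem:I}, and where $\ell(X)$ is chosen so that $(\ell(X),g(X)) \in C$; such an $\ell(X)$ exists precisely because $g(X) \in I$. The inclusion $C' \subseteq C$ is then immediate: both generators lie in $C$ (the first because $f(X) \in J$, the second by the choice of $\ell(X)$), and $C$ is a left $R[X,\theta]$-submodule, so it contains every $\ast$-combination of them.

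For the reverse inclusion I would take an arbitrary $(a(X),b(X)) \in C$. Its second coordinate satisfies $b(X) \in I = \langle g(X)\rangle$ by Lemma~\ref{lem:I}, so $b(X) = r(X)\,g(X)$ modulo $X^{\beta}-1$ for some $r(X) \in R[X,\theta]$. The central move is to subtract the generator $(\ell(X),g(X))$ scaled by this same $r(X)$ so as to annihilate the second coordinate. Using the definition of $\ast$ in Equation~(\ref{eq:mult}),
\[
(a(X),b(X)) - r(X)\ast(\ell(X),g(X)) = \left( a(X) - \eta(r(X))\,\ell(X),\; \0 \right),
\]
and this difference again belongs to $C$. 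By the definition of $J$, its first coordinate $a(X) - \eta(r(X))\ell(X)$ therefore lies in $J = \langle f(X)\rangle$, so there is an $s(X) \in \F_4[X]/\langle X^{\alpha}-1\rangle$ with $a(X) - \eta(r(X))\ell(X) = s(X)\,f(X)$ modulo $X^{\alpha}-1$.

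It then remains to realize the leftover $(s(X)f(X),\0)$ as a $\ast$-multiple of $(f(X),\0)$. Here I would invoke that $\eta$ is surjective and that the inclusion $\F_4 \hookrightarrow R$ is a section of $\eta$: viewing the $\F_4$-polynomial $s(X)$ as an element $\tilde{s}(X) \in R[X,\theta]$ with all coefficients in $\F_4 \subseteq R$ gives $\eta(\tilde{s}(X)) = s(X)$, whence $\tilde{s}(X)\ast(f(X),\0) = (s(X)f(X),\0)$. Combining the two displays yields
\[
(a(X),b(X)) = r(X)\ast(\ell(X),g(X)) + \tilde{s}(X)\ast(f(X),\0) \in C',
\]
so $C \subseteq C'$ and the two codes coincide.

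The bilinearity of $\ast$ and the arithmetic in the two quotient rings are harmless bookkeeping. The one point deserving care, which I expect to be the main (if modest) obstacle, is exactly this last lifting step: the scalar producing $s(X)f(X)$ lives a priori only in $\F_4[X]$, whereas the module action is by $R[X,\theta]$, so one must verify that the $\ast$-action of a chosen lift $\tilde{s}(X)$ reproduces $s(X)f(X)$ in the first coordinate while leaving the zero second coordinate undisturbed. This is guaranteed precisely because $\eta$ restricts to the identity on $\F_4$ and because $\eta(r(X))\ell(X)$ is reduced modulo $X^{\alpha}-1$, matching the ring in which $J$ lives.
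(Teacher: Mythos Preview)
Your proposal is correct and follows essentially the same route as the paper's own proof: use Lemma~\ref{lem:I} to write the second coordinate as $r(X)g(X)$, subtract $r(X)\ast(\ell(X),g(X))$ to kill it, and then invoke Lemma~\ref{lem:J} to express the surviving first coordinate as a multiple of $f(X)$. If anything, you are more careful than the paper about the lifting step $(s(X)f(X),\0)=\tilde{s}(X)\ast(f(X),\0)$ and about the easy inclusion $C'\subseteq C$, both of which the paper leaves implicit.
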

	
\begin{proof}
Let $\bc=(\bc_{1},\bc_{2}) \in C$ with $\bc_1 \in \F_4^{\alpha}$ and $\bc_2 \in R^{\beta}$. Then $c_{2}(X) \in I$ and we write $c_{2}(X) = q(X) \, g(X)$ for some $q(X) \in R[X,\theta]/\langle X^{\beta }-1\rangle$. There exist $\ell(X) \in \F_4[X]/\langle X^{\alpha }-1\rangle$ such that  $(\ell(X),g(X)) \in C$ since $g(X) \in I$. We have 
\begin{align*}
\bc=(\bc_{1},\bc_{2}) &= (c_1(X),\0) + (\0, q(X) g(X)) \\
&= (c_1(X),\0) + (\eta(q(X)) \ell(X), q(X) g(X)) + (\eta(q(X)) \ell(X), \0)\\
&= (c_1(X),\0) + q(X) \ast \left((\ell(X),g(X)) + (\ell(X), \0)\right). 
\end{align*}
Hence, $(\eta(q(X)) \ell(X) + c_1(X), \0) \in C$, making $\eta(q(X)) \ell(X) + c_1(X) \in J$. By Lemma~\ref{lem:J}, there exists $p(X) \in J$ satisfying $\eta(q(X)) \ell(X) + c_1(X) = p(X) f(X)$. Thus, $c(X) = q(X) \ast (\ell(X), g(X)) + (p(X) f(X),\0)$.
\end{proof}
	
\begin{lemma}
Let $C$ be an $\F_{4}R$-skew cyclic code. Then, 
without loss of generality, we can assume $\deg (\ell(X)) < \deg (f(X))$.
\end{lemma}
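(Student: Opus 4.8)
The plan is to invoke the Euclidean division algorithm in the commutative ring $\F_4[X]$ to trim the degree of $\ell(X)$ below that of $f(X)$, and then to verify that this replacement leaves the code $C$ unchanged. By the preceding classification theorem, $C$ is generated as a left $R[X,\theta]$-submodule of $R_{\alpha,\beta}$ by $(f(X),\0)$ and $(\ell(X),g(X))$, where $f(X)$ is the left divisor of $X^{\alpha}-1$ supplied by Lemma~\ref{lem:J} and $g(X)$ is as in Equation~(\ref{eq:g}). If $\deg(\ell(X)) < \deg(f(X))$ already, there is nothing to prove, so I would assume the contrary; in particular $f(X)$ is then a proper divisor, so $\deg(f(X)) < \alpha$.

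First I would write $\ell(X) = q(X)\, f(X) + r(X)$ with $q(X), r(X) \in \F_4[X]$ and either $r(X)=\0$ or $\deg(r(X)) < \deg(f(X))$. This is legitimate because the first component of $R_{\alpha,\beta}$ is the ordinary commutative quotient $\F_4[X]/\langle X^{\alpha}-1\rangle$, in which no skewness is involved. Since $\ell(X)$ is represented by a polynomial of degree less than $\alpha$, the product $q(X)\,f(X)$ also has degree less than $\alpha$, so reducing it modulo $X^{\alpha}-1$ changes nothing. Next I would regard $q(X)$ as an element of $R[X,\theta]$ via the inclusion $\F_4 \hookrightarrow R$, $a \mapsto a + v\cdot 0$; under the homomorphism $\eta$ of Equation~(\ref{eq:ringhomo}) these coefficients are fixed, so $\eta(q(X)) = q(X)$.

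Now, using the multiplication $\ast$ of Equation~(\ref{eq:mult}), I would compute
\[
q(X) \ast (f(X),\0) = \left(\eta(q(X))\, f(X),\; q(X)\cdot \0\right) = \left(q(X)\, f(X),\; \0\right),
\]
which lies in $C$ because $(f(X),\0)\in C$ and $C$ is a left $R[X,\theta]$-submodule. Subtracting this from $(\ell(X),g(X))\in C$ and using closure of $C$ under addition yields
\[
(\ell(X),g(X)) - q(X)\ast(f(X),\0) = \left(\ell(X)-q(X)\,f(X),\; g(X)\right) = (r(X),g(X)) \in C.
\]
Finally, since $(\ell(X),g(X)) = (r(X),g(X)) + q(X)\ast(f(X),\0)$, the pair $\{(f(X),\0),(r(X),g(X))\}$ generates the same left submodule as $\{(f(X),\0),(\ell(X),g(X))\}$, namely $C$ itself. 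Replacing $\ell(X)$ by $r(X)$ therefore achieves $\deg(\ell(X)) < \deg(f(X))$ without loss of generality.

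The one point requiring care — and the main obstacle — is the interplay between the division, carried out in the polynomial ring $\F_4[X]$, and the reduction modulo $X^{\alpha}-1$ that is built into the first component of $\ast$; the degree bound $\deg(q(X)\,f(X)) < \alpha$ is precisely what reconciles the two. By contrast, the skewness of $R[X,\theta]$ plays no role here, since the second coordinate of $(f(X),\0)$ is zero and $q(X)$ acts on it trivially.
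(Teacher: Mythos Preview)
Your proof is correct and rests on the same idea as the paper's: replace $\ell(X)$ by its remainder upon division by $f(X)$ in $\F_4[X]$, and check that the generating set for $C$ is unchanged. The only cosmetic difference is that the paper carries out the reduction one degree at a time---subtracting $s X^{k}\ast(f(X),\0)$ with $k=\deg(\ell)-\deg(f)$ and iterating---whereas you invoke the Euclidean algorithm in a single stroke; your extra remark about the degree bound $\deg(q(X)f(X))<\alpha$ reconciling the lift to $\F_4[X]$ with the quotient is a nice touch that the paper leaves implicit.
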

	
\begin{proof}
Suppose that $\deg(\ell(X)) - \deg(f(X)) = k \geq 0$. Consider the code $D$ generated by the set
\[
\{(f(X),\0),(\ell(X),g(X)) + s X^{k} \ast (f(X),\0)\} =
\{(f(X),\0),(\ell_{1}(X),g(X))\}
\]
where $\ell_{1}(X) = \ell(X)+s X^{k} f(X)$ for some $s \in \F_4$. Hence, $D \subseteq C$. On the other hand, 
\[
(\ell(X),g(X)) =(\ell(X),g(X)) +s X^{k} \ast (f(X),\0) - s X^{k} \ast (f(X),\0).
\]
Hence, $C\subseteq D$, making $C=D$. Notice here that $\deg(\ell_{1}(X)) < \deg(\ell(X))$. We repeat the same process on $\ell_{1}(X)$ until we obtain $\deg(\ell(X)) < \deg(f(X))$. 
\end{proof}
	
\begin{theorem}
An $\F_{4}R$-skew cyclic code is equivalent to an 
$\F_{4}R$-cyclic code if both $\alpha$ and $\beta$ are odd integers.
\end{theorem}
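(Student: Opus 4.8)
The plan is to exploit the fact that the skewness in $T_{\theta}$ affects only the last $\beta$ coordinates, and that the ordinary cyclic shift can be recovered as a suitable power of the skew cyclic shift once $\alpha$ and $\beta$ are odd. Write $T$ for the ordinary $\F_4R$-cyclic shift, namely $T(\bc) = (a_{\alpha-1},a_0,\ldots,a_{\alpha-2},b_{\beta-1},b_0,\ldots,b_{\beta-2})$, obtained from $T_{\theta}$ by deleting every application of $\theta$. First I would record how iterates of $T_{\theta}$ act: for every $k \in \N$, the map $T_{\theta}^{k}$ shifts the first block cyclically by $k$ positions with no twist (since $\theta$ never touches the $\F_4$-part), and shifts the second block cyclically by $k$ positions while applying $\theta^{k}$ to each of its entries. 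This matches the description of $X^{j} \ast c(X)$ recorded earlier. Because $\theta^{2} = \mathrm{id}$, the twist on the second block depends only on the parity of $k$.

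Next I would set $\gamma := \lcm(\alpha,\beta)$. Since $\alpha$ and $\beta$ are both odd, $\gamma$ is odd, hence $\gamma+1$ is even. The key claim is that $T_{\theta}^{\gamma+1} = T$. Indeed, on the first block the shift amount is $\gamma+1 \equiv 1 \Mod{\alpha}$ because $\alpha \mid \gamma$; on the second block the shift amount is likewise $\gamma+1 \equiv 1 \Mod{\beta}$ because $\beta \mid \gamma$, while the accumulated twist is $\theta^{\gamma+1} = \mathrm{id}$ since $\gamma+1$ is even. Thus both blocks undergo exactly one untwisted cyclic shift, which is precisely $T$.

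With this identity the conclusion is immediate. Let $C$ be an $\F_{4}R$-skew cyclic code. By definition $C$ is closed under $T_{\theta}$, hence under every iterate $T_{\theta}^{k}$; in particular $T(\bc) = T_{\theta}^{\gamma+1}(\bc) \in C$ for all $\bc \in C$. Since $C$ is already an $\F_{4}R$-linear ($R$-submodule) code and is now invariant under the ordinary cyclic shift $T$, it is in fact an $\F_{4}R$-cyclic code outright, so the claimed equivalence holds trivially via the identity map.

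The only genuine content is the observation that $T$ is a power of $T_{\theta}$; everything else is bookkeeping. The one point requiring care—and the place where the hypothesis enters—is the parity count: one needs $\gamma+1$ even so that $\theta^{\gamma+1}=\mathrm{id}$, and simultaneously $\gamma+1 \equiv 1$ modulo both $\alpha$ and $\beta$. Both hold at once precisely because $\gamma$ is odd, which is exactly the assumption that $\alpha$ and $\beta$ are odd. If either were even, $\gamma$ would be even and $\theta^{\gamma+1}=\theta \neq \mathrm{id}$, so this argument would break down—consistent with the separate treatment reserved for the even-length case.
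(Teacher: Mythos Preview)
Your proof is correct and follows the same strategy as the paper: both show that the ordinary cyclic shift $T$ is an even power of $T_{\theta}$, namely a power $m$ with $m$ even and $m\equiv 1 \Mod{\gamma}$ where $\gamma=\lcm(\alpha,\beta)$. The only difference is cosmetic---the paper extracts such an $m=2j$ from the B\'ezout relation $\gamma k + 2j = 1$, whereas you take the more direct choice $m=\gamma+1$; your version is slightly cleaner but the underlying idea is identical.
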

	
\begin{proof}
Let $C$ be an $\F_{4}R$-skew cyclic code and 
$\gamma :=\lcm(\alpha,\beta)$. Then  $\gcd(\gamma,2)=1$ since $\gamma$ is odd. Then there exist integers $k$ and $j$ such that $\gamma k + 2 j=1$ and, hence, $2 j = 1 - \gamma k = 1 + \gamma t$ for some $t > 0$ where $t \equiv -k \pmod {\gamma}$. As in Equation (\ref{eq:c}), let 
$c(X)=(a(X), b(X)) \in C$. Then 
\begin{align*}
X^{2 j} \ast c(X) &= X^{2 j} \ast 
\left(\sum_{i=0}^{\alpha -1} a_{i} X^i  ,
\sum_{i=0}^{\beta -1}b_{i}X^i \right) =
\left(\sum_{i=0}^{\alpha-1} a_{i} X^{i+2 j} , 
\sum_{i=0}^{\beta-1} \theta^{2 j} (b_{i}) X^{i+2 j} \right)\\ 
&= \left(\sum_{i=0}^{\alpha-1} a_{i} X^{i+1+ \gamma t}, 
\sum_{i=0}^{\beta-1} \theta^{2 j} (b_{i}) X^{i+1+ \gamma t}\right)\\
& = \left(\sum\limits_{i=0}^{\alpha-2} a_{i} X^{i+1+\gamma t} + 
a_{\alpha-1} X^{\alpha +\gamma t} , 
\sum_{i=0}^{\beta-2} b_{i} X^{i+1+\gamma t}
+ a_{\beta-1} X^{\beta +\gamma t} \right) \\
&= \left(\sum_{i=0}^{\alpha-2} a_{i} X^{i+1} + a_{\alpha-1} ,
\sum_{i=0}^{\beta-2} b_{i} X^{i+1} + b_{\beta-1} \right).
\end{align*}
The second to the last equation is due to $\theta^2(r)=r$ for all $r \in R$ while the last equation follows because $X^{\alpha} =X^{\beta}=X^{\gamma}=1$.
\end{proof}
	
\begin{theorem}
An $\F_{4}R$-skew cyclic code is equivalent to an 
$\F_{4}R$-quasi-cyclic code of index $2$ if both $\alpha$ and $\beta$ are even integers.
\end{theorem}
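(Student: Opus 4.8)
The plan is to reduce everything to the single structural identity $T_{\theta}^{2}=T^{2}$, where $T$ is the ordinary (non-skew) $\F_{4}R$-cyclic shift. The only ingredient this requires is the fact, recorded right after Equation~(\ref{eq:theta}), that $\theta^{-1}=\theta$ on $R$, equivalently $\theta^{2}=\mathrm{id}$. In the preceding theorem this fact was used to collapse a \emph{single} high power $T_{\theta}^{2j}$ down to one ordinary cyclic shift; here I would instead collapse $T_{\theta}^{2}$ down to the ordinary $\F_{4}R$-cyclic shift performed twice. So I first introduce $T$ by $T(\bc)=(a_{\alpha-1},a_{0},\ldots,a_{\alpha-2},b_{\beta-1},b_{0},\ldots,b_{\beta-2})$, which is exactly $T_{\theta}$ with $\theta$ deleted from the $R$-block, and I recall that invariance under $T^{2}$ is precisely the defining property of an $\F_{4}R$-quasi-cyclic code of index $2$.

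The core step is a direct computation of $T_{\theta}^{2}(\bc)$ for an arbitrary $\bc=(a_{0},\ldots,a_{\alpha-1},b_{0},\ldots,b_{\beta-1})$. On the first block of length $\alpha$ the operator $T_{\theta}$ acts as the plain cyclic shift (no $\theta$ is applied to the $a_{i}$), so two applications shift the $a_{i}$ by two positions modulo $\alpha$. On the second block of length $\beta$ each application of $T_{\theta}$ shifts by one position \emph{and} applies $\theta$, so two applications shift the $b_{j}$ by two positions modulo $\beta$ while applying $\theta^{2}$. Because $\theta^{2}=\mathrm{id}$, the accumulated automorphism is trivial and the $R$-block is shifted by two with no twist. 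Hence $T_{\theta}^{2}=T^{2}$.

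With this identity in hand the argument closes quickly. Since $C$ is $\F_{4}R$-skew cyclic it is closed under $T_{\theta}$, hence under $T_{\theta}^{2}=T^{2}$, so $C$ is invariant under the double $\F_{4}R$-cyclic shift and is therefore $\F_{4}R$-quasi-cyclic of index $2$. The role of the hypothesis that $\alpha$ and $\beta$ are both even is to make index $2$ the genuine index: since $\gcd(2,\alpha)=\gcd(2,\beta)=2$, the operator $T^{2}$ generates only a proper, index-$2$ subgroup of the cyclic-shift group on each block, so $T^{2}$-invariance does not in general force full $T$-invariance. This is exactly the contrast with the odd case of the previous theorem, where $\gcd(2,\alpha)=\gcd(2,\beta)=1$, the powers of $T^{2}$ already exhaust every shift, and the code is forced to be cyclic.

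I expect the main obstacle to be bookkeeping rather than concept. One must track the shift-by-two separately on each block, reading indices modulo $\alpha$ in the first block and modulo $\beta$ in the second, and verify that the two $\theta$'s picked up on the $R$-block really cancel via $\theta^{2}=\mathrm{id}$ \emph{before} the reindexing is carried out. Once the equality $T_{\theta}^{2}=T^{2}$ is established the conclusion is immediate, so the only genuine care needed is to fix the meaning of ``$\F_{4}R$-quasi-cyclic of index $2$'' as invariance under $T^{2}$ and to pinpoint, as above, where the evenness of $\alpha$ and $\beta$ is used.
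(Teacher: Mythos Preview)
Your proof is correct and takes essentially the same approach as the paper: both arguments reduce to the observation that, because $\theta^{2}=\mathrm{id}$, two applications of the skew shift $T_{\theta}$ coincide with the ordinary double cyclic shift $T^{2}$ on each block, so skew-cyclic invariance forces $T^{2}$-invariance. Your version is in fact more streamlined than the paper's, which routes the same computation through $\gamma=\lcm(\alpha,\beta)$ and an auxiliary relation $2j=2+k\gamma$ that is not really needed; your explicit remark on where the evenness hypothesis enters (to prevent $T^{2}$-invariance from collapsing to full $T$-invariance) is a useful clarification the paper leaves implicit.
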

	
\begin{proof}
Let $C$ be an $\F_{4}R$-skew cyclic code, $\alpha=2N$, and $\beta=2M$ for some $N,M \in \N$. Then $\gamma =\lcm(\alpha,\beta)$ is an even integer with $\gcd(\gamma,2)=2$. 
For any 
\[
\bc=(a_{0,0}, a_{0,1}, \ldots, a_{N-1,0}, a_{N-1,1}~,~b_{0,0}, b_{0,1}, \ldots, b_{M-1,0}, b_{M-1,1}) \in C
\]
there exist integers $k \geq 0$ and $j$ such that $2 j = 2 + k \gamma$. Consider 
\begin{align*}
&T_{\theta^{2+ k \gamma}} \left( a_{0,0},a_{0,1},\ldots,a_{N-1,0},a_{N-1,1} , 
b_{0,0},b_{0,1},\ldots,b_{M-1,0},b_{M-1,1}\right) =  \\
&T_{\theta^{k \gamma}}\left(a_{N-1,0},a_{N-1,1},\ldots,a_{N-2,0},a_{N-2,1} ,
b_{M-1,0},b_{M-1,1},\ldots,b_{M-2,0},b_{M-2,1}\right) = \\
& \left( a_{N-1,0},a_{N-1,1},\ldots,a_{N-2,0},a_{N-2,1},
b_{M-1,0},b_{M-1,1},\ldots,b_{M-2,0},b_{M-2,1}\right) \in C 
\end{align*}
since $T_{\theta^{k \gamma}}(\bc) = \bc$ for any $\bc \in 
\F_{4}^{\alpha }R^{\beta}$. 
Thus, $C$ is equivalent to an $\F_{4}R$-quasi cyclic code of length $n=\alpha +\beta$ and index $2$.
\end{proof}

\section{The Gray Mapping}

The classical Gray mapping $\phi^{\ast}: R \mapsto \F_4^2$ is defined by $\phi^{\ast}(a+vb)=(a+b,a)$ for any $a + v b \in R$. The \textit{Lee weight} of any element in $R$ is the Hamming weight of its image under $\phi^{\ast}$. This map extends naturally to vectors in $R^n$. For any $\mathbf{x}=(x_{0},x_{1},\ldots,x_{\alpha-1}) \in \F_{4}^{\alpha }$ and 
$\mathbf{y}=(y_{0},y_{1},\ldots,y_{\beta-1}) \in R^{\beta}$, the \textit{Gray map} over $\F_4 R$ is defined by 
\begin{equation*}
\phi : \F_{4}^{\alpha }R^{\beta } \mapsto \F_{4}^{\alpha+ 2
\beta} \mbox{ sends } (\mathbf{x},\mathbf{y}) \mbox{ to } (\mathbf{x},\phi^{\ast}(\mathbf{y})).
\end{equation*}
The map $\phi $ is an isometry which transforms the Lee distance in $\F_{4}^{\alpha }R^{\beta}$ to the Hamming distance in $\F_{4}^{\alpha + 2 \beta}$. For any $\F_{4}R$-linear code 
$\mathcal{C}$, the code $\phi(\mathcal{C})$ is $\F_4$-linear. Furthermore, we have 
\begin{equation} \label{eq:weight}
wt(\mathbf{x},\mathbf{y})=wt_{H}(\mathbf{x})+wt_{L}(\mathbf{y})
\end{equation}
where $wt_{H}(\mathbf{x})$ is the Hamming weight of $\mathbf{x}$ and 
$wt_{L}(\mathbf{y})$ is the Lee weight of $\mathbf{y}$.
	
\begin{theorem}
Let $C$ be a self-orthogonal $\F_{4}R$-linear code under the inner product defined in Equation~(\ref{eq:inprod}). Then $\phi(C)$ is a Euclidean self-orthogonal code over $\F_{4}$.
\end{theorem}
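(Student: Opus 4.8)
The plan is to verify self-orthogonality of $\phi(C)$ directly, by showing that the Euclidean inner product of two Gray images coincides with the coefficient of $v$ in the $R$-valued inner product of the original codewords. First I would fix arbitrary $\bx,\by \in C$, written as $\bx=(a_0,\ldots,a_{\alpha-1},b_0,\ldots,b_{\beta-1})$ and $\by=(d_0,\ldots,d_{\alpha-1},e_0,\ldots,e_{\beta-1})$, and expand each $R$-coordinate in the idempotent basis as $b_j=p_j+vq_j$ and $e_j=r_j+vs_j$ with all of $p_j,q_j,r_j,s_j\in\F_4$. By the definition of $\phi^{\ast}$, the Gray-image coordinates arising from these symbols are $\phi^{\ast}(b_j)=(p_j+q_j,p_j)$ and $\phi^{\ast}(e_j)=(r_j+s_j,r_j)$, while the first $\alpha$ coordinates are carried over unchanged.

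Next I would multiply $b_j e_j$ inside $R$. Using $v^2=v$ and separating the part fixed by $v$ from the rest gives
\[
b_j e_j = p_j r_j + v\,(p_j s_j + q_j r_j + q_j s_j).
\]
Summing over $j$ and adding the $\F_4^{\alpha}$ contribution $v\sum_i a_i d_i$, the hypothesis $\inp{\bx}{\by}=0$ in $R$ splits, via the decomposition $R=\langle v\rangle\times\langle v+1\rangle$, into the two $\F_4$-equations read off from the $1$- and $v$-components: namely $\sum_j p_j r_j=0$ together with $\sum_i a_i d_i+\sum_j (p_j s_j+q_j r_j+q_j s_j)=0$.

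I would then expand the Euclidean inner product of the two images straight from the coordinates listed above. The two Gray coordinates coming from the $j$-th $R$-symbol contribute $(p_j+q_j)(r_j+s_j)+p_j r_j$, which in characteristic $2$ collapses—the doubled term $2p_j r_j$ vanishes—to exactly $p_j s_j+q_j r_j+q_j s_j$. Combined with $\sum_i a_i d_i$ from the untouched block, this shows that $\inp{\phi(\bx)}{\phi(\by)}$ equals the $v$-component of $\inp{\bx}{\by}$. That component is precisely one of the two equations forced to vanish above, so $\inp{\phi(\bx)}{\phi(\by)}=0$; since $\bx,\by$ were arbitrary, $\phi(C)$ is Euclidean self-orthogonal.

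The only delicate point is the bookkeeping of the characteristic-$2$ cancellation: it is exactly the vanishing of $2p_j r_j$ that lets the first Gray coordinate $p_j+q_j$ pair with $r_j+s_j$ so as to reproduce the $v$-component, rather than the $1$-component, of $b_j e_j$. I expect no genuine obstacle beyond keeping this arithmetic and the coordinate ordering of $\phi^{\ast}$ consistent; the statement is essentially a transparent consequence of how $\phi^{\ast}$ was tailored to the idempotent decomposition of $R$.
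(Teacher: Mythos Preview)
Your proof is correct and essentially identical to the paper's: both write the $R$-coordinates as $p+vq$ (the paper uses vectors $\bb+v\bc$), split $\inp{\bx}{\by}=0$ into its $1$- and $v$-components, and then check that the Euclidean inner product of the Gray images equals the $v$-component via the characteristic-$2$ cancellation $2p_jr_j=0$. The only cosmetic difference is that you phrase the conclusion as ``the Euclidean inner product equals the $v$-component of the $R$-inner product,'' whereas the paper simply writes out the final sum and observes it vanishes.
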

	
\begin{proof}
It suffices to show that the Gray images of codewords are Euclidean orthogonal
whenever the codewords are orthogonal. Let $\mathcal{C}$ be a self-orthogonal $\F_{4}R$-linear code of length $\alpha+\beta$. Let $\bv=(\ba,\bb + v \, \bc),  \bw=(\bd,\bu + v \, \bs) \in \F_{4}^{\alpha}\times R^{\beta}$ be codewords in $C$ with $\ba, \bd \in \F_{4}^{\alpha}$ and 
$\bb, \bc, \bu, \bs \in 
\F_{4}^{\beta}$. Then, by Equation (\ref{eq:inprod}),
\[
\left\langle \bv, \bw \right\rangle  
= v (\ba \cdot \bd) + \bb \cdot \bu + v (\bb \cdot \bs + \bc \cdot \bu + \bc \cdot \bs)= 0 + v 0 \in R.
\]
Hence, $\bb \cdot \bu =0$ and $\ba \cdot \bd + \bb \cdot \bs + 
\bc \cdot \bu + \bc \cdot \bs =0$ .
Since $\phi(\bv) =(\ba, \bb + \bc ,\bb)$ and 
$\phi(\bw) =(\bd, \bu + \bs , \bu)$, one gets   
\[
\phi(\bv) \cdot \phi(\bw) =
\ba \cdot \bd + \bb \cdot \bu + \bb \cdot \bs + 
\bc \cdot \bu + \bc \cdot \bs + \bb \cdot \bu =0. 
\]
Therefore, the code $\phi(C)$ is Euclidean self-orthogonal.
\end{proof}

\begin{theorem}\label{thm:product}
Let $C$ be an $\F_4 R$-skew cyclic code of length $n=\alpha +2 \beta$. Then, 
$\phi(C)=C_{0} \otimes C_{1}\otimes C_{2},$ where $C_{0}$ is a cyclic code of 
length $\alpha $ in $\F_4[X] /\langle X^{\alpha }-1\rangle $ and 
both $C_{1}$ and $C_{2}$ are skew cyclic codes of length $\beta$ in 
$R[X]/\left\langle X^{\beta}-1\right\rangle$. Moreover, 
$\displaystyle{\abs{\phi(C)} = \prod_{i=0}^{2} \abs{C_{i}}}$.
\end{theorem}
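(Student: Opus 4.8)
The plan is to exploit that $v$ and $v+1$ are orthogonal idempotents of $R$ and that $\theta$ interchanges them, $\theta(v)=v+1$ and $\theta(v+1)=v$. Write a codeword as $\bc=(\ba,\by)$ with $\ba\in\F_4^{\alpha}$, $\by\in R^{\beta}$, and split $\by=v\,\bs+(v+1)\,\mathbf{t}$ into its $v$- and $(v+1)$-components $\bs,\mathbf{t}\in\F_4^{\beta}$. A direct computation with $\phi^{\ast}(a+vb)=(a+b,a)$ gives $\phi(\bc)=(\ba,\bs,\mathbf{t})$, so $\phi(C)$ breaks into a first block of length $\alpha$ recording the first-block symbols, a middle block of length $\beta$ recording the $v$-components, and a last block of length $\beta$ recording the $(v+1)$-components.

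First I would peel off the middle block. As $v+(v+1)=1$ and $v(v+1)=0$, every $\bc\in C$ decomposes as $\bc=v\ast\bc+(v+1)\ast\bc$, giving $C=vC\oplus(v+1)C$ as $\F_4$-spaces with both summands inside $C$. Since $\eta(v)=0$ and $\eta(v+1)=1$, we have $v\ast\bc=(\0,v\by)$ and $(v+1)\ast\bc=(\ba,(v+1)\by)$; applying the $\F_4$-linear map $\phi$ sends these to $(\0,\bs,\0)$ and $(\ba,\0,\mathbf{t})$. Thus $\phi(vC)$ is supported on the middle $\beta$ coordinates and $\phi((v+1)C)$ on the outer $\alpha+\beta$ coordinates. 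Because these supports are disjoint, $\phi(C)=C_{1}\otimes D$, where $C_{1}$ is the middle-block code of $v$-components and $D\subseteq\F_4^{\alpha+\beta}$ is the code living on the first and last blocks.

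The crux is to split $D$ as $C_{0}\otimes C_{2}$, i.e. to decouple the first $\alpha$ symbols from the $(v+1)$-components. Let $C_{0}$ be the projection of $C$ onto the first block; it is $\F_4$-linear and, because the first block of $T_{\theta}(\bc)$ is the ordinary cyclic shift of $\ba$, it is a cyclic code of length $\alpha$. I claim $(\ba,\0)\in D$ for every $\ba\in C_{0}$. Starting from $\bc=(\ba,\by)\in C$, form $(v+1)\ast\bc=(\ba,(v+1)\by)\in C$ and apply $T_{\theta}$: each entry of the $R$-block, being of the form $(v+1)y_j$, maps to $\theta((v+1)y_j)=v\,\theta(y_j)\in vR$ since $\theta(v+1)=v$, so the whole $R$-block lies in $(vR)^{\beta}$ and has $(v+1)$-component $\0$, while the first block is the cyclic shift $S\ba$. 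Reading off the first and $(v+1)$ parts shows $(S\ba,\0)\in D$; as $\ba$ ranges over the shift-invariant $C_{0}$ this yields $(\ba,\0)\in D$ for all $\ba\in C_{0}$. Since $D$ is $\F_4$-linear, subtracting $(\ba,\0)$ from any $(\ba,\mathbf{t})\in D$ gives $(\0,\mathbf{t})\in D$; with $C_{2}:=\{\mathbf{t}:(\0,\mathbf{t})\in D\}$ we conclude $D=C_{0}\otimes C_{2}$ and hence $\phi(C)=C_{0}\otimes C_{1}\otimes C_{2}$. The cardinality statement is then immediate: $\phi$ is injective and $vC\cap(v+1)C=\{\0\}$, so $\abs{\phi(C)}=\abs{\phi(vC)}\,\abs{\phi((v+1)C)}=\abs{C_{1}}\abs{C_{0}}\abs{C_{2}}$.

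It remains to recognize $C_{1}$ and $C_{2}$. The projection $\pi$ of $C$ onto its $R$-block is an $R$-submodule closed under the skew shift, hence an $R$-skew cyclic code of length $\beta$, and by construction its $v$- and $(v+1)$-component codes are precisely $C_{1}$ and $C_{2}$; the decomposition result quoted from~\cite{Gursoy2014} then identifies both as skew cyclic codes over $\F_4$ of length $\beta$. The step I expect to be the main obstacle is exactly the decoupling $D=C_{0}\otimes C_{2}$: the idempotent splitting alone only produces the two-fold product $C_{1}\otimes D$, and it is the interplay of the skew shift with the idempotents---$T_{\theta}$ moving a $(v+1)$-supported word into a $v$-supported one while shifting the free $\F_4^{\alpha}$ block independently---that forces the third factor to detach. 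One must also be careful that this same swap $v\leftrightarrow v+1$ means the skew shift interchanges the roles of $C_{1}$ and $C_{2}$, so their skew-cyclicity is best read off from the structure of $\pi$ rather than verified coordinatewise on each block in isolation.
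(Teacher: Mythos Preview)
Your argument is considerably more substantive than the paper's. The paper's proof merely defines $C_0, C_1, C_2$ as the three coordinate projections arising from codewords of $C$, verifies that $C_0$ is cyclic because the $\F_4^\alpha$-block of $T_\theta$ is the ordinary cyclic shift, asserts that $C_1, C_2$ are skew cyclic ``by the same line of argument'', and then simply writes $\phi(C)=C_0\otimes C_1\otimes C_2$ with no further justification. But a code is in general only \emph{contained} in the product of its coordinate projections; the equality is precisely the content of the theorem and is left unproved in the paper.

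Your route genuinely supplies this missing content. The idempotent splitting $C=vC\oplus(v+1)C$ together with $\eta(v)=0$, $\eta(v+1)=1$ cleanly separates the middle $\beta$ coordinates from the outer $\alpha+\beta$ coordinates, yielding the intermediate factorization into $C_1$ and $D$. The decisive step is your use of $\theta(v+1)=v$: applying $T_\theta$ to $(v+1)\ast\bc$ annihilates the $(v+1)$-component of the $R$-block while cyclically shifting the $\F_4^\alpha$-block, which forces $(C_0,\0)\subseteq D$ and hence $D=C_0\otimes C_2$. This mechanism is entirely absent from the paper's proof and is what actually makes the three-fold product hold.

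Your closing caveat is well placed and, if anything, understated. Because $\theta$ swaps the idempotents, the $R$-skew shift on $\pi(C)=vC_1\oplus(v+1)C_2$ sends the $v$-component to the $(v+1)$-component and conversely; the direct conclusion is $T_F(C_1)=C_2$ (with $T_F$ the $\F_4$-skew shift) rather than $T_F(C_i)=C_i$. The paper's ``same line of argument'' runs into exactly this, and the version of the result from~\cite{Gursoy2014} you invoke must be the one adapted to $\theta=\varphi\circ\theta_1$, not the original for $\theta_1$ alone. Your product decomposition and the cardinality formula---the real substance of the theorem---stand independently of this point; only the description of $C_1$ and $C_2$ as individually skew cyclic needs the extra care you flag.
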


\begin{proof}
From $\{\bx=(a_{0},a_{1},\ldots,a_{\alpha-1},b_{0} + v c_{0}, b_{1} + v c_{1},\ldots, 
b_{\beta-1} + v c_{\beta-1}) : \bx \in C \}$, we construct the codes
\begin{multline*}
C_{0} := \{(a_{0},a_{1},\ldots,a_{\alpha-1}) \} , 
C_{1} := \{(b_{0} + c_{0}, b_{1} + c_{1},\ldots, b_{\beta-1} + c_{\beta -1}) \}, \\
\mbox{and } C_{2} := \{(b_{0}, b_{1},\ldots, b_{\beta -1})\}.
\end{multline*}
A codeword $\bu:=(a_{0},a_{1},\ldots,a_{\alpha -1}) \in C_{0}$ corresponds to a codeword 
\[
\bx=(a_{0},a_{1},\ldots,a_{\alpha-1},b_{0} + v c_{0}, b_{1} + v c_{1},\ldots, b_{\beta-1} + v c_{\beta-1}) \in C.
\]
Since $C$ is an $\F_4 R$-skew cyclic code, we know that $T_{\theta} (\bx)$ is given by
\[
(a_{\alpha-1}, a_{0}, a_{1}, \ldots,a_{\alpha-2}, \theta(b_{\beta-1} + v c_{\beta-1}), 
\theta(b_{0} + v c_{0}),\ldots,\theta(b_{\beta-2} + v c_{\beta-2})) \in C.
\]
Hence, $\left( a_{\alpha -1},a_{0},a_{1},\ldots ,a_{\alpha -2}\right) \in C_{0}$. This implies that $C_{0}$ is a cyclic code of length $\alpha$ in $\F_4[X]/\langle X^{\alpha}-1 \rangle$. 

The proof that both $C_{1}$ and $C_{2}$ are skew cyclic codes of length $\beta$ in 
$R[X]/\langle X^{\beta}-1\rangle$ follows the same line of argument. Thus, $\phi(C) = C_{0} \otimes C_{1} \otimes C_{2}$ and $\abs{\phi(C)} = 
\prod_{i=0}^{2} \abs{C_{i}}$.     
\end{proof}

\begin{lemma}\label{lemma:four}
Let $C=\left\langle(f( x),\0),(\0,g( x))\right\rangle$ be an $\F_4 R$-skew 
cyclic code with $\ell(X):=\0$. Then $C = C_{1} \otimes C_{2}$ where $C_{1}$ 
is a skew cyclic code over $\F_4$ and $C_{2}$ is a skew cyclic code over $R$.
\end{lemma}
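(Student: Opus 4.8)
The plan is to compute the full codeword set of $C=\langle (f(x),\0),(\0,g(x))\rangle$ directly from its left $R[X,\theta]$-module structure and then show it splits as a Cartesian product of its two coordinate blocks. Every element of $C$ is a finite sum $\sum_i r_i(X)\ast(f(x),\0)+\sum_j s_j(X)\ast(\0,g(x))$ with $r_i(X),s_j(X)\in R[X,\theta]$, so the first step is to evaluate each $\ast$-term using the multiplication rule of Equation~(\ref{eq:mult}).

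Applying Equation~(\ref{eq:mult}) gives $r(X)\ast(f(x),\0)=(\eta(r(X))\,f(x),\0)$ and $s(X)\ast(\0,g(x))=(\0,s(X)\,g(x))$, because $\eta(r(X))\cdot\0=\0$ and $r(X)\cdot\0=\0$. Hence no cross terms couple the two blocks, and a general codeword has the form $(h(X)\,f(x),\,t(X)\,g(x))$, where $h(X):=\sum_i\eta(r_i(X))\in\F_4[X]/\langle X^\alpha-1\rangle$ and $t(X):=\sum_j s_j(X)\in R[X,\theta]/\langle X^\beta-1\rangle$. The crucial observation is that $h(X)$ and $t(X)$ may be chosen independently, since they are assembled from disjoint families of module scalars.

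To identify the factors I would note that, because the ring homomorphism $\eta$ of Equation~(\ref{eq:ringhomo}) is surjective onto $\F_4$, the polynomial $h(X)$ ranges over all of $\F_4[X]/\langle X^\alpha-1\rangle$; thus the first coordinate sweeps out exactly the ideal $C_1:=\langle f(x)\rangle$, which is a cyclic code over $\F_4$ by Lemma~\ref{lem:J} (a skew cyclic code for the identity automorphism, matching the statement). Likewise $t(X)$ ranges over all of $R[X,\theta]/\langle X^\beta-1\rangle$, so the second coordinate sweeps out exactly the left submodule $C_2:=\langle g(x)\rangle$, a skew cyclic code over $R$ by Lemma~\ref{lem:I}. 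By the independence of $h(X)$ and $t(X)$, the codeword set is precisely $\{(a(x),b(x)) : a(x)\in C_1,\ b(x)\in C_2\}$, which is $C_1\otimes C_2$ in the notation of the Preliminaries; counting pairs then yields $\abs{C}=\abs{C_1}\,\abs{C_2}$.

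I expect the only delicate point to be justifying that the two coordinates genuinely decouple into a \emph{full} Cartesian product rather than a proper subset. One inclusion (that every element of $C$ projects into $C_1$ and into $C_2$) is immediate from the computation above; for the reverse I must produce, for each prescribed $a(x)\in C_1$ and $b(x)\in C_2$, a single element of $C$ having exactly those two coordinates. This is where surjectivity of $\eta$ (to realize an arbitrary $h(X)$ in the first block) together with the availability of the two separate generators $(f(x),\0)$ and $(\0,g(x))$ (to fix the two blocks independently) does the work; confirming that passing through $\eta$ does not shrink the attainable first-coordinate ideal below $\langle f(x)\rangle$ is the heart of the argument.
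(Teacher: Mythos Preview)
Your proposal is correct and follows essentially the same route as the paper: both arguments observe that a general codeword in $\langle (f(x),\0),(\0,g(x))\rangle$ has the form $(q_1(X)f(X),\,q_2(X)g(X))$ with the two coordinates varying independently, and then identify the projections with $C_1=\langle f(X)\rangle$ and $C_2=\langle g(X)\rangle$. The paper's proof is a one-line chain of ``if and only if'' statements, whereas you spell out the decoupling via Equation~(\ref{eq:mult}) and explicitly invoke the surjectivity of $\eta$ to ensure the first block attains all of $\langle f(X)\rangle$; this extra care is justified (and the paper's version tacitly relies on it), but the underlying idea is identical.
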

	
\begin{proof}
Note that $\bc=(\bc_{1}, \bc_{2}) \in C $ if and only if $\bc_{1} =q_{1} f(X)$ and 
$\bc_{2}=q_{2} g(X)$ if and only if $\bc_{1} \in C_{1} = 
(f(X))$ and $\bc_{2} \in C_{2}=(g(X))$ if and only if $C=C_{1}\otimes C_{2}$ 
where $C_{1}=(f(X))$ and $\bc_{2} \in C_{2}=(g(X))$.
\end{proof}
	
\begin{lemma}\label{self-orthogonal2} 
Let $C= C_{1} \otimes C_{2}$ where $C_{1}$ is an $\F_4$-skew cyclic Euclidean self-orthogonal code 
and $C_{2}$ is an $R$-skew cyclic self-orthogonal code over $R$. Then $C$ is a self-orthogonal 
$\F_4 R$-skew cyclic code.
\end{lemma}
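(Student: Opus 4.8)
The plan is to verify the two assertions in the conclusion separately: that $C = C_1 \otimes C_2$ is $\F_4 R$-skew cyclic, and that it is self-orthogonal under the inner product of Equation~(\ref{eq:inprod}). For the skew-cyclic part I would invoke the characterization proved earlier that a code is $\F_4 R$-skew cyclic precisely when it is a left $R[X,\theta]$-submodule of $R_{\alpha,\beta}$. Writing $C_1 = \langle f(X)\rangle$ and $C_2 = \langle g(X)\rangle$, every element of $C$ has the form $(a(X), b(X))$ with $a(X) \in C_1$ and $b(X) \in C_2$. Applying the product of Equation~(\ref{eq:mult}) gives $r(X) \ast (a(X), b(X)) = (\eta(r(X))\, a(X),\, r(X) b(X))$; the first coordinate remains in $C_1$ because $\eta(r(X)) \in \F_4[X]$ and $C_1$ is closed under multiplication in $\F_4[X]/\langle X^{\alpha} - 1\rangle$, while the second remains in $C_2$ because $C_2$ is a left $R[X,\theta]$-submodule. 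Closure under addition is componentwise, so $C$ is a left $R[X,\theta]$-submodule, hence $\F_4 R$-skew cyclic. This is essentially the converse direction of Lemma~\ref{lemma:four}.

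The real content is self-orthogonality, which I would obtain by a direct computation. Take arbitrary $\bv = (\ba, \bb)$ and $\bw = (\bd, \bu)$ in $C$, with $\ba, \bd \in C_1 \subseteq \F_4^{\alpha}$ and $\bb, \bu \in C_2 \subseteq R^{\beta}$. By Equation~(\ref{eq:inprod}) the inner product splits with no cross terms between the two blocks, namely $\inp{\bv}{\bw} = v\,(\ba \cdot \bd) + \bb \cdot \bu$. The first summand vanishes since $C_1$ is Euclidean self-orthogonal over $\F_4$, giving $\ba \cdot \bd = 0$, and the second vanishes since $C_2$ is self-orthogonal over $R$, giving $\bb \cdot \bu = 0$. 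Hence $\inp{\bv}{\bw} = 0$ for every such pair, which is exactly $C \subseteq C^{\perp}$.

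Since each computation is short, the main point to watch is bookkeeping rather than a deep obstacle: the action on the first $\alpha$ coordinates is twisted through $\eta$ (and reduces to an ordinary cyclic shift, as $\eta(X) = X$), whereas the action on the last $\beta$ coordinates is the genuine $\theta$-skew shift. I must therefore apply the closure hypothesis on $C_1$ to the shift that actually appears in the first block of $T_{\theta}$, and respect that the inner product carries the factor $v$ on the $\F_4$ part but not on the $R$ part, so that the two vanishing conditions land on the correct summands and no interaction between the blocks is introduced.
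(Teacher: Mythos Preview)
Your proof is correct and follows essentially the same approach as the paper: for self-orthogonality you split $\inp{\bv}{\bw} = v(\ba\cdot\bd) + \bb\cdot\bu$ and use the self-orthogonality of $C_1$ and $C_2$ to kill each summand, exactly as the paper does. The only difference is that you also verify explicitly that $C$ is $\F_4R$-skew cyclic, whereas the paper takes this for granted from the context of Lemma~\ref{lemma:four}; this extra check is harmless and arguably more complete.
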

	
\begin{proof}
Suppose $C_{1}\subseteq C_{1}^{\perp}$ and $C_{2}\subseteq C_{2}^{\perp}$. 
Let $\bc=(\bc_{1}, \bc_{2}) \in C$ and $\bu = (\bc_{3},\bc_{4}) \in C$. 
Then $\bc_{1},\bc_{3} \in C_{1}= (f(X))$ and 
$\bc_{2},\bc_{4} \in C_{2}=(g(X))$. This implies that $\bc_{1} \cdot \bc_{3} = 0 \in \F_4$ 
and $\left\langle \bc_{2},\bc_{4}\right\rangle = 0 \in R$. 
Hence,
\[
\left\langle \bc,\bu\right\rangle = v (\bc_{1} \cdot \bc_{3})
+\left\langle \bc_{2},\bc_{4}\right\rangle =v \cdot 0 + 0 = 0.
\]
Therefore, $C\subseteq C^{\perp }$, implying that the $\F_4R$-skew cyclic code $C$ is self-orthogonal.  
\end{proof}
	
Note that the converse does not hold. In fact, $C^{\perp }\neq C_{1}^{\perp}\otimes C_{2}^{\perp }$ in general.
	
\section{$\F_4$-Codes from $\F_4R$-Skew Cyclic Codes}

\begin{table}[h!]
	\caption{Examples, in increasing minimum distances, of good choices of $\F_4$-skew cyclic codes $C_1$ and $C_2$ to use in Theorem~\ref{thm:product}.}
	\label{table:goodskew}
	\setlength{\tabcolsep}{4pt}
	\renewcommand{\arraystretch}{1.2}
	\begin{tabular}{c l l }
		\hline
		No. & Parameter & Generator Polynomial(s) for $C_1$ and $C_2$  \\
		\hline
		1 & $[6,3,4]$ & $ w + w^2 X + w^2 X^2 + X^3 $\\
		
		2 & $[30,19,7]$ & $ w^2 X + X^2 + X^4 + w^2 X^6 + X^8 + w^2 X^{10} + X^{11}$ and \\
		&          & $w + w^2 X + X^2 + X^4 + w^2 X^6 + X^8 + w^2 X^{10} + X^{11} $\\
		
		3 & $[12,3,8]$ & $w^2 + w X + w X^2 + X^3 + w^2 X^6 + w X^7 + w X^8 + X^9$ \\
		
		4 & $[22,11,8]$ & $ 1 + X + w X^2 + w X^3 + X^4 + X^5 + X^6 + X^7 + w^2 X^8 + $ \\
		& & $ w^2 X^9 + X^{10} + X^{11}$ \\
		
		5 & $[38,19,12]$ & $ 1 + X + w^2 X^2 + w^2 X^3 + w^2 X^6 + w^2 X^7 + w^2 X^8 + $ \\
		& & $ w^2 X^9 + w X^{10} + w X^{11} + w X^{12} + w X^{13} + w X^{16} + w X^{17} +$\\
		& & $ X^{18} + X^{19}$ and \\ 
		& & $1 + X + w X^2 + w X^3 + w X^6 + w X^7 + w X^8 + w X^9 + w^2 X^{10}$ \\
		& & $+ w^2 X^{11} + w^2 X^{12} + w^2 X^{13} + w^2 X^{16} + w^2 X^{17} + X^{18} + X^{19}$\\
		
		6 & $[30,6,18]$ & $w + w X + w^2 X^2 + X^3 + X^5 + w^2 X^6 + w X^7 + X^8 + $ \\
		& & $ w^2 X^9 + X^{10} + w X^{11} + w X^{12} + X^{14} + X^{15} + w^2 X^{16} + $ \\
		& & $w X^{17} + X^{18} + w^2 X^{19} + w^2 X^{20} + X^{22} + X^{23} + X^{24}$\\
		
		\hline
	\end{tabular}
\end{table}
	
The image of a code over a given ring under the Gray map is a code over a field. This latter code is usually inferior in terms of the usual measure of relative rate and relative distance when compared with linear codes directly constructed algebraically over the corresponding field. In this respect, $\F_4 R$-skew cyclic codes are not exempted. Their excellent structures, particularly as revealed in Theorem~\ref{thm:product}, allow us to determine good choices of the ingredient codes $C_0$, $C_1$, and $C_2$ that result in best-possible dimension and distance profiles. 

Since factorization in the skew polynomial ring $\F_4[X,\theta]$ requires considerably more care than in $\F_4[X]$, we started by finding good codes based on the factorization of $X^{\beta}-1 \in \F_4[X,\theta]$. A search for good skew cyclic codes was done following the suggestion of Caruso and Le Borgne in~\cite{CLB17}. Through personal communication Le Borgne sent us an implementation routine in MAGMA~\cite{BCP97}. The identified good skew cyclic codes were subsequently used as $C_1$ or $C_2$.

Based on the particular structure described in Theorem~\ref{thm:product}, the three ingredient codes $C_0,C_1$, and $C_2$ are combined by the {\tt DirectSum} routine in MAGMA to yield the code $\phi(C)$. To minimize the drop in the relative dimension and relative distance of $\phi(C)$, we chose the ingredient codes to have equal minimum distances and relatively small dimensions. Examples of good choices for the codes $C_1$ and $C_2$ are given in Table~\ref{table:goodskew} while Table~\ref{table:goodcyclic} contains the cyclic codes that we used as $C_0$. The best parameters of the resulting $\phi(C)$ are listed in Table~\ref{table:examples}.

\begin{table}[t!]
\caption{Examples of Good $\F_4$-Cyclic Codes $C_0$}
\setlength{\tabcolsep}{4pt}
\renewcommand{\arraystretch}{1.1}
\label{table:goodcyclic}
\begin{tabular}{c l l }
\hline
No. & Parameter & Generator Polynomial of $C_{0}$ \\
\hline
1 & $[4,1,4]$ & $X^3 + X^2 + X + 1$\\

2 & $[5,2,4]$ & $X^3 + w X^2 + w X + 1$ \\

3 & $[7,3,4]$ & $X^4 + X^2 + X + 1$\\

4 & $[15,11,4]$ & $X^4 + X^3 + X^2 + w^2 X + w $ \\

5 & $[17,13,4]$ & $X^4 + X^3 + w X^2 + X + 1$\\
		
6 & $[35,30,4]$ & $X^5 + w^2 X^4 + w X^2 + w X + 1$\\
	
\hline
		
7 & $[15,7,7]$ & $X^8 + X^6 + w X^5 + w X^4 + x^3 + w X^2 + w^2$ \\

8 & $[17,9,7]$ & $X^8 + w X^7 + w X^5 + w X^4 + w X^3 + w X + 1$ \\

9 & $[19,10,7]$ & $X^9 + w X^8 + w X^6 + w X^5 + w^2 X^4 + w^2 X^3 + w^2 X + 1$ \\

10 & $[35,24,7]$ & $X^{11} + w X^{10} + X^9 + w X^8 + w X^7 + w X^5 + w X^3 + X + 1$\\
		
11 & $[41,30,7]$ & $X^{11} + w X^{10} + w X^7 + w^2 X^6 + w^2 X^5 + w X^4 + w X + 1$\\
		
\hline 
		
12 & $[15,6,8]$ & $X^9 + w X^8 + X^7 + X^5 + w X^4 + w^2 X^2 + w^2 X + 1$ \\

13 &$[17,8,8]$ & $X^9 + w^2 X^8 + w X^7 + w X^6 + w X^3 + w X^2 + w^2 X + 1$ \\

14 &$[19,9,8]$ & $X^{10} + w X^9 + w^2 X^8 + w^2 X^7 + X^5 + w X^3 + w X^2 + w^2 X + 1$\\

15 & $[21,10,8]$ & $X^{11} + w X^{10} + X^8 + w^2 X^7 + w X^6 + X^5 + w X^4 + X^3 +$ \\
&& $ w X^2 + X + w$\\
		
16 & $[35,23,8]$ & $X^{12} + w^2 X^{11} + w^2 X^{10} + w^2 X^9 + w X^7 + w X^6 + w X^5 + $\\
&& $w X^4 + w X^3 + X^2 + 1$\\

17 & $[43,29,8]$ & $X^{14} + w^2 X^{13} + w^2 X^{12} + X^{11} + w^2 X^9 + w X^5 + X^3 + $ \\
&& $ w X^2 + w X + 1$\\

\hline

18 & $[15,2,12]$ & $X^{13} + w X^{12} + w X^{11} + X^{10} + X^8 + w X^7 + w X^6 + X^5 + $\\
&& $ X^3 + w X^2 + w X + 1$\\

19 &$[17,4,12]$ & $X^{13} + X^{12} + w X^{11} + X^9 + w X^8 + w^2 X^7 + w^2 X^6 + w X^5 + $ \\
&& $X^4 + w X^2 + X + 1$\\

20 &$[21,6,12]$ & $X^{15} + X^{14} + w^2 X^{13} + w^2 X^{12} + X^{11} + w X^{10} + X^9 + $ \\
&& $w^2 X^8 + X^7 + X^6 + X^5 + w X^4 + w^2 X^3 + w X^2 + wX + 1$\\

21 & $[29,14,12]$ & $X^{15} + w^2 X^{14} + w X^{13} + w X^{12} + X^{11} + w X^{10} + w X^9 + $ \\
&& $ X^8 + X^7 + w X^6 + w X^5 + X^4 + w X^3 + w X^2 + w^2 X + 1$\\

22 & $[37,18,12]$ & $X^{19} + w X^{18} + w X^{17} + w X^{15} + X^{13} + w^2 X^{12} + w X^{11} +  $\\
&& $ w X^{10} + w X^9 + w X^8 + w^2 X^7 + X^6 + w X^4 + w X^2 + w X + 1$\\

23 & $[39,19,12]$ & $X^{20} + w^2 X^{18} + w^2 X^{17} + w X^{15} + X^{14} + X^{12} + X^{11} + $\\
&& $ X^{10} + w X^9 + w^2 X^8 + w X^6 + X^5 + X^3 + w X^2 + w$\\

\hline

24 & $[43,14,18]$ & $X^{29} + w^2 X^{28} + X^{27} + w X^{24} + w^2 X^{23} 
+ X^{20} + w^2 X^{19} + $\\
&& $ w X^{17} + w^2 X^{15} + w X^{14} + w^2 X^{12} + w X^{10} + X^9 + w X^6 + $\\
&& $w^2 X^5 + X^2 + w X + 1 $ \\
\hline

\end{tabular}
\end{table}

\begin{table}[t!]
\caption{Examples of Good $\phi(C)$ where $C$ is an $\F_4 R$-Skew Cyclic Codes of Length $\alpha + \beta$. The construction is based on Theorem~\ref{thm:product}. The ingredient codes are chosen with $\delta:=d(C_0)=d(C_1)=d(C_2)$ and $\alpha$ is the length of $C_0$ while $\beta$ is the length of $C_1$ and $C_2$. The resulting $\F_4$-linear code $\phi(C)$ has length $\alpha + 2 \beta$, dimension $k_0 + k_1 + k_2$, and minimum distance $\delta$.}
\renewcommand{\arraystretch}{1.2}
\label{table:examples}
\begin{tabular}{c c | c c  | c c }
\hline
No. & $\phi(C)$ & No. & $\phi(C)$ & No. & $\phi(C)$ \\
\hline
1 & $[16,7,4]$ & 11 & $[101,68,7]$ & 21 & $[79,45,8]$ \\
2 & $[17,8,4]$ & 12 & $[39,12,8]$ & 22 & $[87,51,8]$ \\
3 & $[19,9,4]$ & 13 & $[41,14,8]$ & 23 & $[91,40,12]$ \\
4 & $[27,17,4]$ & 14 & $[43,15,8]$ & 24 & $[93,42,12]$ \\
5 & $[29,19,4]$ & 15 & $[45,16,8]$ & 25 & $[97,44,12]$ \\
6 & $[47,36,4]$ & 16 & $[59,29,8]$ & 26 & $[105,52,12]$ \\
7 & $[75,45,7]$ & 17 & $[61,30,8]$ & 27 & $[113,56,12]$ \\
8 & $[77,47,7]$ & 18 & $[63,31,8]$ & 28 & $[115,57,12]$ \\
9 & $[79,48,7]$ & 19 & $[65,32,8]$ & 29 & $[103,26,18]$ \\
10 & $[95,62,7]$ & 20 & $[67,35,8]$ &    & \\
\hline
\end{tabular}
\end{table}
	
\section{DNA Skew Cyclic Code over $\F_4R$}
	
The encoding and decoding systems to store or transfer information or data
by mimicking DNA sequences are known collectively as \textit{DNA codes}. 
The strands, \emph{i.e.}, DNA strings, are preferred to be short to
make the synthesis easy and cheap. They must, however, satisfy
numerous constraints to be useful for applications. The two most common
applications are as basic tools for biomolecular computation and as
biomolecular barcoding-tagging system to identify and manipulate individual
molecules in complex libraries.
	
Numerous approaches to DNA codes have been extensively investigated. A recent addition to several surveys that have appeared in the literature is	the work of Limbachiya \etal in~\cite{LRG}. Tools from algebraic coding theory, both from finite fields as well as rings, have been fruitfully used since the inception. A relatively early work by Marathe \etal in~\cite{Marathe2001} discussed important design criteria and bounds derived from error-correcting codes. We continue on this line of studies by constructing $\F_4R$-DNA skew cyclic codes.
	
The \textit{Watson-Crick complement} of a strand is the strand obtained by
replacing each $\mathtt{A}$ by $\mathtt{T}$ and vice versa, and each $%
\mathtt{G}$ by $\mathtt{C}$ and vice versa. One writes $\overline{\mathtt{A}}
= \mathtt{T }$, $\overline{\mathtt{T}} = \mathtt{A }$, $\overline{\mathtt{C}}
=\mathtt{G}$, and $\overline{\mathtt{G}}=\mathtt{C}$. Let $\mathbf{x}
=(x_1,x_2,\ldots,x_n)$ and $\mathbf{y}=(y_1,y_2,\ldots,y_n)$ be distinct
codewords in a DNA code $\mathcal{D}$. The \textit{reverse} of $\mathbf{x}$
is $\mathbf{x}_{\mbox{rev}}=(x_{n},x_{n-1}, \ldots, x_{1})$. The \textit{
complement} of $\mathbf{x}$ is $\mathbf{x}^{c}=(\overline{x_{1}},\overline{x_{2}},\ldots, \overline{x_n})$. Hence, $\mathbf{x}_{\mbox{rev}}^{c}= (\overline{x_{n}},\overline{x_{n-1}}, \ldots,\overline{x_{1}})$ is the \textit{reverse complement} of $\mathbf{x}$.
	
The process in which a strand and its complement bound to form a double-helix is known as \textit{hybridization}. Constraints on the codewords in a DNA code are imposed to avoid it. Let $\mathcal{D}$ be a DNA code of fixed length $n$, cardinality $M$, and minimum distance $d$. Then the constraints on the Hamming distances 
\begin{equation}  \label{eq:Hamming}
\mathrm{{wt_{H}}(\mathbf{x},\mathbf{y}) \geq d \mbox{ and } {wt_{H}}(\mathbf{x}^{c},\mathbf{y}_{\mbox{rev}}) \geq d \mbox{ for all } \mathbf{x},\mathbf{y}
\in \mathcal{D}}
\end{equation}
are imposed to prevent hybridization between any two strands as well as between a strand and the reverse of any other strand. A \textit{reverse-complement DNA code} $\mathcal{D}$ has parameters $(n,M,d)_4$ that satisfies Equation (\ref{eq:Hamming}). It is known, for instance, that any $\F_4$-cyclic code with generator polynomial $f(X)$ is reverse-complement if and only if $f(X) \in \F_4[X]$ is a self-reciprocal polynomial, \ie, $f(X)=X^{\deg(f(X))} f(X^{-1})$, not divisible by $X-1$.
	
Abualrub \etal studied $\F_4$-DNA codes of odd lengths in~\cite{AGZ06} where they use the bijection between the set of DNA alphabets $\{\mathtt{A,T,C,G\}}$ and $\F_4:=\{0,1,w,w^{2}\}$, in that respective ordering. We extend this idea by letting, for all $a\in R$, 
\begin{equation}\label{eq:bijection}
\theta(a) + \theta (\overline{a}) = v + 1.
\end{equation}
	
\begin{lemma}\label{lemma:six}
For all $a,b \in R$, we have
\begin{enumerate}[(i)]
	\item $\overline{a+b}=\overline{a}+\overline{b}+v$
	\item $\overline{(v+1)a}=(v+1)\overline{a}+v$.
	\item $\overline{va}=v\overline{a}$.
\end{enumerate}
\end{lemma}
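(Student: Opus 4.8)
The plan is to first convert the \emph{implicit} defining relation~(\ref{eq:bijection}) into an \emph{explicit} formula for the complement map, and then to settle each of the three identities by direct substitution. The relation $\theta(a) + \theta(\overline{a}) = v+1$ pins down $\overline{a}$ uniquely because $\theta$ is a bijection on $R$, so the real content is to solve for $\overline{a}$ in closed form. Working in characteristic two and using that $\theta$ is additive, the relation rearranges to $\theta(\overline{a}) = (v+1) + \theta(a)$; applying $\theta$ to both sides and invoking $\theta^{-1}=\theta$ together with additivity gives $\overline{a} = \theta(v+1) + a$. A one-line computation from~(\ref{eq:theta}), writing $v+1 = 1 + v\cdot 1$ so that $\theta(v+1) = 1^2 + (v+1)\cdot 1^2 = v$, then yields
\[
\overline{a} = a + v \qquad \text{for all } a \in R.
\]

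With this closed form available, each of (i)--(iii) collapses to mechanical algebra in $R$, needing only $v^2 = v$ and the characteristic-two identity $v+v=0$. For (i), the right-hand side is $\overline{a}+\overline{b}+v = (a+v)+(b+v)+v = a+b+v$, which equals $\overline{a+b}=(a+b)+v$. For (iii), $\overline{va} = va+v = v(a+1)$ while $v\overline{a} = v(a+v) = va + v^2 = va + v$, and the two agree. Part (ii) runs identically: expanding $(v+1)\overline{a}+v = (v+1)(a+v)+v$ and using $(v+1)v = v^2+v = 0$ leaves $(v+1)a+v$, which is exactly $\overline{(v+1)a}$.

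The one step that demands genuine thought is the passage to the closed form $\overline{a}=a+v$; everything downstream is routine bookkeeping. That derivation rests squarely on two structural facts about $\theta$ already recorded earlier in the paper, namely that $\theta$ is a ring homomorphism (hence additive) and that it is an involution. I anticipate no obstacle beyond tracking the characteristic-two cancellations carefully; in particular the evaluation $\theta(v+1)=v$ should be double-checked, since the entire argument hinges on it.
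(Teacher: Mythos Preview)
Your argument is correct. The derivation of the closed form $\overline{a}=a+v$ from~(\ref{eq:bijection}) via the involution property $\theta^{-1}=\theta$ is sound, and the computation $\theta(v+1)=1+(v+1)=v$ checks out; one can also confirm $\overline{a}=a+v$ directly against the codon table that follows the lemma (e.g.\ $0\leftrightarrow\mathtt{AA}$ and $v\leftrightarrow\mathtt{TT}$, or $w\leftrightarrow\mathtt{CA}$ and $v+w\leftrightarrow\mathtt{GT}$). Once that formula is in hand, all three identities are indeed one-line verifications using $v^2=v$ and characteristic two.

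The paper itself states Lemma~\ref{lemma:six} without proof, so there is no ``paper's approach'' to compare against; your write-up supplies exactly the kind of routine verification the authors presumably had in mind and chose to omit.
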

	
The map defines the following bijection between the elements of $R$ and the $16$ codons in $\{\mathtt{A,C,G,T\}^{2}}$.
	
\begin{tabular}{cc|cc|cc|cc}
\hline
$a \in R$ & Codon & $a \in R$ & Codon & $a \in R$ & Codon & $a \in R$ & Codon
\\ \hline
$0$ & \texttt{AA} & $v+w$ & \texttt{GT} & $vw^2+1$ & \texttt{CG} & $vw$ & 
\texttt{CC} \\ 
$v$ & \texttt{TT} & $w$ & \texttt{CA} & $vw^2+w^2$ & \texttt{AG} & $vw+w^2$
& \texttt{TC} \\ 
$v+1$ & \texttt{AT} & $v+w^2$ & \texttt{CT} & $vw^2$ & \texttt{GG} & $vw+w$
& \texttt{AC} \\ 
$1$ & \texttt{TA} & $w^2$ & \texttt{GA} & $vw^2+w$ & \texttt{TG} & $vw+1$ & 
\texttt{GC} \\ \hline
\end{tabular}

\begin{definition}
An $R$-linear code $C$\ of length $\beta $ is called DNA-skew cyclic if
\begin{enumerate}
	\item The code $C$ is $R$-skew cyclic of length $\beta$. 
	\item For any codeword $\bx \in C$, $\bx \neq \bx_{\mbox{rev}}^{c}$ with $\bx_{\mbox{rev}}^{c} \in C$.
\end{enumerate}
\end{definition}
	
We adopt the following definition of reciprocal polynomials and a useful lemma from~\cite{BGM17} .
	
\begin{definition}
Let $f(X)=f_{0} +f_{1} X + \ldots + f_{k} X^{k}$ be a polynomial in $R[X,\theta]$. The reciprocal polynomial of $f(X)$ is the polynomial $f^{\star}(X)$ given by
\begin{equation}\label{eq:reciprocal}
f^{\star}(X)  
= f_{0} X^{k} + f_{1} X^{k-1} + f_{2} X^{k-2} + \ldots + f_{k-1} X+ f_{k}.
\end{equation}
If $f(X)=f(X)^{\star}$, then $f(X)$ is \emph{self-reciprocal}.
\end{definition}

\begin{lemma}\label{lemma:ast} \cite{BGM17}
Let $f(X),g(X) \in R[X,\theta]$ with $\deg (f(X)) \geq \deg (g(X))$. Then the following assertions hold.
\begin{enumerate}
	\item $(f(X)g(X))^{\star}=f(X)^{\star} \ast g(X)^{\star}$
	\item $(f(X)+g(X))^{\star}=f(X)^{\star} + g(X) \ast X^{\deg(f(X))-\deg(g(X))}$.
\end{enumerate}
\end{lemma}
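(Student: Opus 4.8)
The plan is to prove both identities by a single direct expansion in coordinates, leaning on the two structural facts that control $R[X,\theta]$: the commutation rule $X a = \theta(a) X$ and the involution property $\theta^{2}=\mathrm{id}$ (equivalently $\theta^{-1}=\theta$, as recorded earlier). Throughout, I would write $f(X)=\sum_{i=0}^{m} f_i X^{i}$ and $g(X)=\sum_{j=0}^{n} g_j X^{j}$ with $m=\deg(f(X))\ge n=\deg(g(X))$, and I would read off the coefficients of the reciprocals directly from the definition in Equation~(\ref{eq:reciprocal}): the coefficient of $X^{p}$ in $f(X)^{\star}$ is $f_{m-p}$, and that of $X^{q}$ in $g(X)^{\star}$ is $g_{n-q}$.

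For assertion (1) I would first expand the skew product, pushing each $X^{i}$ past the coefficients of $g$, so that
\[
f(X)g(X)=\sum_{i=0}^{m}\sum_{j=0}^{n} f_i\,\theta^{i}(g_j)\,X^{i+j},
\]
whence the coefficient of $X^{\ell}$ in $f(X)g(X)$ is $\sum_{i+j=\ell} f_i\,\theta^{i}(g_j)$. Applying the reciprocal reflects exponents about $\deg(f(X)g(X))=m+n$, so the coefficient of $X^{s}$ in $(f(X)g(X))^{\star}$ is $\sum_{i+j=m+n-s} f_i\,\theta^{i}(g_j)$. Re-indexing by $p=m-i$ and $q=n-j$ rewrites this as $\sum_{p+q=s} f_{m-p}\,\theta^{m-p}(g_{n-q})$, i.e. a sum over the reciprocal coefficients with a twist $\theta^{m-p}$. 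The identity then reduces to checking that the coordinate expansion of $f(X)^{\star}\ast g(X)^{\star}$ produces exactly this twist; this is precisely where $\theta^{2}=\mathrm{id}$ does the work, since it collapses every $\theta$-power to its parity and lets the twist carried by $\ast$ reconcile $\theta^{m-p}$ with the exponent $\theta^{p}$ that a naive product would give.

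For assertion (2) the essential point is that reversal of a sum must be taken about the \emph{larger} degree $m$. Since $\deg(f(X)+g(X))=m$ (the leading term $f_m X^{m}$ survives when $m>n$, and one treats $m=n$ by tracking the leading coefficient), reflecting $f(X)+g(X)$ about $m$ leaves $f(X)^{\star}$ intact while sending the block of $g$-coefficients, originally seated at exponents $0,\ldots,n$, up to the exponents $m,\ldots,m-n$. That reflected block is exactly $g(X)^{\star}$ lifted by $X^{m-n}$ on the right, and because right multiplication by a power of $X$ introduces \emph{no} $\theta$-twist, it appears cleanly as $g(X)^{\star}\ast X^{\deg(f(X))-\deg(g(X))}$, giving the stated formula. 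The hard part throughout is the bookkeeping of the $\theta$-twists generated by the skew commutation, together with the degree claims: because $R$ is not an integral domain (it carries the zero divisors $v$ and $v+1$), one must check that the relevant leading coefficient $f_m\,\theta^{m}(g_n)$ does not vanish, so that the reflection is genuinely taken about $m+n$ in (1) and about $m$ in (2); this holds in all applications here, where the polynomials involved are generator polynomials with unit leading coefficients.
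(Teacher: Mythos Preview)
The paper does not supply its own proof of this lemma: it is quoted from \cite{BGM17}, whose setting is the ordinary (commutative) polynomial ring, so there is no in-paper argument to compare against. Your task is therefore to manufacture a proof in the skew ring $R[X,\theta]$, and this is where a real gap appears.

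Your treatment of assertion~(2) is sound, and you rightly observe (implicitly) that the printed statement should carry $g(X)^{\star}$ rather than $g(X)$; right multiplication by $X^{m-n}$ introduces no $\theta$-twist, so the shifted reversal of $g$ lines up exactly with $g(X)^{\star}\,X^{m-n}$.

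Assertion~(1), however, does not follow from the parity reduction you invoke. After your reindexing, the coefficient of $X^{s}$ in $(fg)^{\star}$ is $\sum_{p+q=s} f_{m-p}\,\theta^{\,m-p}(g_{n-q})$, while the coefficient of $X^{s}$ in the skew product $f^{\star}g^{\star}$ is $\sum_{p+q=s} f_{m-p}\,\theta^{\,p}(g_{n-q})$. The involution $\theta^{2}=\mathrm{id}$ does reduce both exponents to their parities, but $m-p\equiv p\pmod 2$ holds for all $p$ if and only if $m$ is even. When $m=\deg f$ is odd the two sides genuinely differ: take $f(X)=X$ and $g(X)=w\in\F_{4}\subset R$; then $(fg)^{\star}=(\theta(w)X)^{\star}=\theta(w)=w^{2}$, whereas $f^{\star}g^{\star}=1\cdot w=w$. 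So the identity in~(1), read with the reciprocal of Equation~(\ref{eq:reciprocal}) and with $\ast$ the skew product, is simply false in $R[X,\theta]$ for odd $\deg f$, and no amount of bookkeeping will close the gap you left at ``lets the twist carried by $\ast$ reconcile\ldots''. Either an extra hypothesis (such as $\deg f$ even, which is in fact how the lemma is applied later via $c(X)\ast X^{\beta-k-1}$ with $\beta$ even), a modified definition of $(\cdot)^{\star}$ adapted to the skew setting, or a restriction to the commutative case is required; the source \cite{BGM17} works over an ordinary polynomial ring, where $\theta=\mathrm{id}$ and the obstruction evaporates.
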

	
A code is \textit{reversible complement} if $\bc_{\mbox{rev}}^c \in C$ for any $\bc \in C$. The next theorem characterizes reversible complement $R$-skew cylic code.
	
\begin{theorem}
Let two polynomials $g_{1}(X)$ and $g_{2}(X)$ divide $X^{\beta}-1$ in $\F_4[X]$. Let $C= \langle g(X)\rangle $ be $R$-skew cyclic with $g(X)=v g_{1}(X) + (v+1) g_{2}(X)$. Then $C$ is reversible complement if and only if $g(X)$ is self-reciprocal and $v (X^{\beta}-1)/(X-1) \in C$.
\end{theorem}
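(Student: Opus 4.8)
The first move is to make the complement completely explicit. From the defining relation $\theta(a)+\theta(\overline a)=v+1$ together with the additivity of $\theta$, the involution $\theta^{-1}=\theta$, and the value $\theta(v)=v+1$ read off from~(\ref{eq:theta}), one gets $\overline a=a+v$ for every $a\in R$; this is exactly Lemma~\ref{lemma:six}. Hence complementing a vector adds $v$ to each coordinate, and in polynomial language the reverse complement of a codeword $c(X)$ is
\[
c(X)\ \longmapsto\ \widetilde c(X)+v\,h(X),\qquad h(X):=\frac{X^{\beta}-1}{X-1}=1+X+\dots+X^{\beta-1},
\]
where $\widetilde c(X)$ is the coefficient reversal of the length-$\beta$ vector $c(X)$. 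Thus $C$ is reversible complement precisely when $\widetilde c(X)+v\,h(X)\in C$ for all $c(X)\in C$. Because $C$ is an $R$-module (so $\0\in C$) and reversal is additive, this affine condition splits into two requirements: (i) $v\,h(X)\in C$, and (ii) $C$ is closed under reversal $c\mapsto\widetilde c$.

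For the forward implication, apply the reverse-complement map to $\0$: since $\widetilde{\0}=\0$, one obtains $v\,h(X)\in C$, which is the stated condition $v(X^{\beta}-1)/(X-1)\in C$. Subtracting this element from $\widetilde c(X)+v\,h(X)\in C$ then yields $\widetilde c(X)\in C$ for every $c(X)\in C$, i.e.\ reversal-closure. It remains to upgrade reversal-closure to self-reciprocity of $g(X)$. Here I would work componentwise through the decomposition $g(X)=v\,g_1(X)+(v+1)\,g_2(X)$: reversal acts independently on the $v$- and $(v+1)$-parts, so reversal-closure of $C$ is equivalent to reversal-closure of $C_1=\langle g_1(X)\rangle$ and $C_2=\langle g_2(X)\rangle$. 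For each $\F_4$-skew-cyclic $C_i$, reversal-closure forces $g_i^{\star}(X)\in C_i$; since $g_i$ divides $X^{\beta}-1$ in $\F_4[X]$ we have $g_i(0)\neq0$, so $\deg g_i^{\star}=\deg g_i$ and $g_i\mid g_i^{\star}$ of equal degree, giving $g_i^{\star}=g_i$ after normalizing the leading coefficient. Recombining, $g^{\star}=v\,g_1^{\star}+(v+1)\,g_2^{\star}=g$.

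For the converse, assume $g(X)=g^{\star}(X)$ and $v\,h(X)\in C$. By (i) and (ii) it suffices to establish reversal-closure. Every codeword is a left multiple $c(X)=q(X)\,g(X)$, so by Lemma~\ref{lemma:ast}(1) (after arranging the degree bookkeeping that its hypothesis requires),
\[
c(X)^{\star}=q(X)^{\star}\,g(X)^{\star}=q(X)^{\star}\,g(X)\in\langle g(X)\rangle=C.
\]
Thus $C$ is closed under the skew reciprocal $\star$ of~(\ref{eq:reciprocal}); transporting this to the genuine coefficient reversal $\widetilde c$ and then adding $v\,h(X)\in C$ places $\widetilde c+v\,h$ inside $C$. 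Since complementation and reversal are each involutions, the reverse-complement map is an involution, so this inclusion is exactly reversible complementarity.

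The one genuinely delicate point — and the step I expect to be the main obstacle — is reconciling the DNA reversal $\widetilde c$ with the skew reciprocal $c^{\star}$. The two differ by a shift $X^{\beta-1-\deg c}$, and in $R[X,\theta]$ left multiplication by $X$ twists coefficients by $\theta$ (recall $X(a+vb)=(a^{2}+(v+1)b^{2})X$ and that $\theta$ swaps $v\leftrightarrow v+1$), so reversal is not literally the $\star$-operation. I would clear this by the intertwining identity $\widetilde{T_{\theta}(c)}=T_{\theta}^{-1}(\widetilde c)$, verified coordinatewise using $\theta^{-1}=\theta$: reversal conjugates the skew shift $T_{\theta}$ into its inverse. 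Since $C$ is closed under $T_{\theta}$ and $T_{\theta}$ has finite order, the reversed set $\{\widetilde c:c\in C\}$ is itself skew cyclic, and $X$ is invertible modulo $X^{\beta}-1$; hence membership of $\widetilde c$ and of $c^{\star}$ in $C$ coincide, which is what lets the reciprocal computations above control the actual DNA reversal. The hypothesis that $g_1,g_2$ divide $X^{\beta}-1$ in $\F_4[X]$ is used twice over: once to guarantee $g_i(0)\neq0$ (fixing degrees so that $g_i^{\star}$ and $g_i$ are forced to coincide rather than merely be associates), and once to keep the two-sided divisibility of $g^{\star}$ into $X^{\beta}-1$ under control.
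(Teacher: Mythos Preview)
Your plan is sound and reaches the same conclusion by essentially the same mechanism as the paper: get $v\,h(X)\in C$ from the zero codeword, pass from reverse-complement closure to reversal closure by subtracting $v\,h(X)$, push reversal closure down to $g^{\star}\in C$ and finish with a degree comparison; for the converse, use Lemma~\ref{lemma:ast} to obtain $c^{\star}=q^{\star}g\in C$ and then shift and add $v\,h(X)$. The one organizational difference is that in the forward direction you route through the CRT splitting $C=vC_1\oplus(v+1)C_2$ and argue componentwise over $\F_4$, whereas the paper stays with $g(X)$ itself and carries out an explicit coefficient computation of $g_{\mathrm{rev}}^{c}(X)+v\,h(X)$ using Lemma~\ref{lemma:six} before right-multiplying by a power of $X$ to isolate $g^{\star}$. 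Your route is cleaner and makes transparent why the hypothesis $g_i\mid X^{\beta}-1$ matters (it pins down $\deg g_i^{\star}=\deg g_i$); the paper's route avoids invoking the decomposition but pays for it with a page of bookkeeping. On the delicate point you flag---that full-length vector reversal $\widetilde c$ and the skew reciprocal $c^{\star}$ differ by a right shift, and right multiplication by $X$ is not a priori a left-module operation---you are in fact more explicit than the paper, which simply writes ``multiplying on the right by $X^{k+1-\beta}$'' without further comment; your intertwining identity $\widetilde{T_\theta(c)}=T_\theta^{-1}(\widetilde c)$ is the right tool to close that gap.
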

	
\begin{proof}
Let $g(X)=v g_{1}(X)+ (v+1) g_{2}(X)$ and $C=\langle g(X)\rangle$ be an $R$-skew cyclic code of length $\beta$. Suppose that $C$ is reversible complement. Since $\0 \in C$, we have 
$\left(\overline{0},\overline{0},\ldots,\overline{0}\right) 
=(v,v,\ldots,v) = v (X^{\beta }-1)/(X-1) \in C$. Let 
\begin{align*}
g_{1}(X) & = g_{0} + g_{1} X + \ldots + g_{t-1} X^{t-1} + X^{t} \mbox{ and }\\
g_{2}(X) & = h_{0} + h_{1} X + \ldots + h_{k-1} X^{k-1} + X^{k}
\end{align*}
where $ t \leq k$. Then
\begin{align*}
g(X) &= v g_{1}(X) + (v+1) g_{2}(X) \\
&= (v g_{0} + (v+1) h_{0}) + (v g_{1} + (v+1) h_{1}) X+ \ldots \\
& + (v g_{t-1} + (v+1) h_{t-1}) X^{t-1} + (v+ (v+1) h_{t}) X^{t} \\
& + (v+1) h_{t+1} X^{t+1}+ \ldots + (v+1) h_{k-1} X^{k-1} + (v+1) X^{k}.
\end{align*}
Since $C$ is reversible complement, it contains   
\begin{align*}
g_{\mbox{rev}}^{c} (X) 
& = v (1 + X + \ldots + X^{\beta-k-2}) + \overline{ (v+1)} 
X^{\beta- k -1 } + \overline{(v+1) h_{k-1}} X^{\beta-k} \\
& + \ldots + \overline{(v+1) h_{t+1}} X^{\beta -t-2} + 
\overline{(v+(v+1) h_{t})} X^{\beta-t-1} \\
& + \overline{(v g_{t-1} + (v+1) h_{t-1})} X^{\beta-t} + \ldots + 
\overline{(v g_{1}  + (v+1) h_{1})} X^{\beta-2} \\
& + \overline{(v g_{0} + (v+1) h_{0})} X^{\beta-1}.
\end{align*}
Using Lemma~\ref{lemma:six} we can write 
\begin{align*}
g_{\mbox{rev}}^{c} (X) & =
v (1+ X + \ldots + X^{\beta-k-2})+ \overline{(v+1)} X^{\beta -k-1} + \overline{(v+1) h_{k-1}} X^{\beta -k} \\ 
& + \ldots + \overline{(v+1) h_{t+1}} X^{\beta-t-2} + \overline{v} 
X^{\beta-t-1} + \overline{(v+1) h_{t}} X^{\beta-t-1} + v X^{\beta-t-1} \\
& + \overline{v g_{t-1}} X^{\beta-t} + \overline{(v+1) h_{t-1}} X^{\beta-t} + v X^{\beta-t} + \ldots + \overline{v g_{1}}X^{\beta-2} \\
&+ \overline{(v+1) h_{1}} X^{\beta -2} + v x^{\beta-2} + \overline{v g_{0}} X^{\beta-1} + \overline{(v+1) h_{0}} X^{\beta-1} + v X^{\beta-1}.
\end{align*}
Because $C$ is $R$-linear, $g_{\mbox{rev}}^{c} (X) + v (X^{\beta}-1) / (X-1) \in C$. This implies 
\begin{align*}
& g_{\mbox{rev}}^{c} (X) + v (X^{\beta }-1)/(X-1) 
= \\
&(\overline{(v+1)} + v) X^{\beta-k-1} + (\overline{(v+1) h_{k-1}} + v) X^{\beta-k} + \ldots + \\
& (\overline{(v+1) h_{t+1}} + v) X^{\beta-t-2} + 
(\overline{v} + v) X^{\beta-t-1+} (\overline{(v+1) h_{t}} + v) 
X^{\beta-t-1} + \\
& (\overline{v g_{t-1}} + v) X^{\beta-t}  + (\overline{(v+1) h_{t-1}} + v) 
X^{\beta-t} + \ldots + (\overline{v g_{1}} + v) X^{\beta-2} + \\
& (\overline{(v+1) h_{1}} + v) X^{\beta -2} + (\overline{v g_{0}} + v) X^{\beta-1} + (\overline{(v+1) h_{0}} + v) X^{\beta-1}.
\end{align*}
By Equation~(\ref{eq:bijection}) we can write
\begin{align*}
& (v+1) X^{\beta-k-1} + (v+1) {h_{k-1}} X^{\beta-k} + \ldots + (v+1) {h_{t+1}} X^{\beta-t-2} +\\
& v X^{\beta-t-1}+ (v+1) {h_{t}} X^{\beta-t-1} + 
X^{\beta-t} + (v+1) {h_{t-1}}X^{\beta-t} + \ldots + \\
& v {g_{1}} X^{\beta-2} + (v+1) {h_{1}} X^{\beta-2}+ v {g_{0}} X^{\beta-1} + (v+1) {h_{0}} 
X^{\beta-1}
\end{align*}
as
\begin{align*}
	& (v+1)  X^{\beta-k-1} + (v+1) {h_{k-1}} X^{\beta-k} + \ldots + (v+1) {h_{t+1}} X^{\beta-t-2} + \\
	&(v + (v+1) {h_{t}}) X^{\beta-t-1} + (v {g_{t-1}} + (v+1) {h_{t-1}}) X^{\beta-t}+ \ldots + \\
	&(v {g_{1}} + (v+1) {h_{1}}) X^{\beta-2} + 
	(v {g_{0}} + (v+1) {h_{0}}) X^{\beta-1}.
\end{align*}
Multiplying on the right by $X^{k + 1 -\beta}$, we obtain
\begin{multline*}
(v+1) + (v+1) {h_{k-1}}\theta(1) X + \ldots + (v+1) {h_{t+1}}\theta(1) X^{k-t-1}  \\
+ (v+(v+1) {h_{t}})\theta(1) X^{k-t} + (v {g_{t-1}} + (v+1) {h_{t-1}})\theta(1) X^{k-t+1} + \ldots   \\
+ (v {g_{1}} + (v+1) {h_{1}})\theta(1) X^{k-1} + (v {g_{0}} + (v+1) {h_{0}})\theta(1) X^{k}.
\end{multline*}
Hence, $g^{\star}(X) \in C$. Since $C = \langle g(X) \rangle$, there exists $q(X) \in R[X,\theta]$ such that $g^{\star}(X)= q(X) \ast g(X)$, which implies $\deg (g^{\star}(X)) = \deg (g(X))$ and $q(X)=1$. Thus, $g^{\star}(X) =g(X)$, as required.
		
Conversely, let $C=\langle g(X)\rangle $ be an $R$-skew cyclic code of length $\beta$ generated by $g(X) = v g_{1}(X) + (v+1) g_{2}(X)$ where $g_{1}(X)$ and $g_{2}(X)$ are two divisors of $X^{\beta}-1$ in $\F_4[X]$. Let $c(X) = c_{0} + c_{1} X + \ldots + c_{k} X^{k} \in C$, then there exist $q(X) \in R[X,\theta]$ such that $c(X) = q(X) \ast g(X)$. By Lemma~\ref{lemma:ast}, $c^{\star}(X) =q^{\star}(X) \ast g^{\star}(X)$. Since $C$ is self reciprocal, $c^{\star}(X) = q^{\star}(X) \ast g(X) \in C$ for any $c(X)\in C$. Since $C$ is skew cylic, $c(X) \ast X^{\beta-k-1} = c_{0} X^{\beta-k-1} + c_{1} X^{\beta-k} + \ldots + c_{k} X^{\beta-1} \in C$. Hence, $v (X^{\beta}-1) / (X-1) = 
v (1 + \ldots + X^{\beta-1}) \in C$. Since $C$ is $R$-linear, 
\begin{multline*}
c(X) \ast X^{\beta-k-1} + v (X^{\beta}-1) / (X-1) =\\
v + \ldots + v X^{\beta-k-2} + (c_{0} + v) X^{\beta-k-1} + \ldots + (c_{k} + v) X^{\beta-1}.
\end{multline*}
By Equation~(\ref{eq:bijection}), 
\[
v + \ldots + v X^{\beta-k-2} + \overline{c_{0}} 
X^{\beta-k-1} + \ldots + \overline{c_{K}} X^{\beta-1}
=(c^{\ast}(X))_{\mbox{rev}}^c \in C.
\]
This concludes the proof.  
\end{proof}
	
The theorem that we have just proved leads us from $R$-skew cyclic code to the definition and subsequent characterization of $\F_4 R$-skew cyclic code in the context of DNA coding.

\begin{definition}
An $\F_4R$-linear code $C$ is DNA-skew cyclic if the followings hold.
\begin{enumerate}
	\item $C$ is an $\F_4R$-skew cyclic code, \ie, $C$ is an $R$-left submodule of 
	\[
	F[X]/(X^{\alpha }-1) \times R[X,\theta ]/\left(X^{\beta }-1\right).
	\]
	
	\item Any codeword $\bc=(\bc_{1},\bc_{2})\in C$ and its reverse complement 
	\[
	\bc_{\mbox{rev}}^{c}=((\bc_{1})_{\mbox{rev}}^{c},(\bc_{2})_{\mbox{rev}}^{c}) \in C
	\]
	must be distinct.
\end{enumerate}
\end{definition}
	
The characterization of reverse complement codes over $\F_4R$ can now be established. 
\begin{theorem}
Let $C=\langle (f(X),\0),(\0,g(X) \rangle$ be an $\F_4R$-skew cyclic code. Note that $\ell(X):=\0$ and $C=C_{1} \otimes C_{2}$ with $C_{1}$ an $\F_4$-cyclic code and $C_{2}$ an $R$-skew cyclic code. Then $C$ is reversible complement if and only if $C_{1}$ and $C_{2}$ are reversible complement over $\F_4$ and $R$, respectively.
\end{theorem}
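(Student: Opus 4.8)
The plan is to reduce everything to the Cartesian product structure supplied by Lemma~\ref{lemma:four} together with the fact that, by the definition of reverse complement over $\F_4R$, the involution $\bc \mapsto \bc_{\mbox{rev}}^{c}$ acts independently on the two blocks. Concretely, Lemma~\ref{lemma:four} gives $C = C_{1} \otimes C_{2} = \{(\bc_{1},\bc_{2}) : \bc_{1} \in C_{1},\ \bc_{2} \in C_{2}\}$, so membership in $C$ decouples into membership of each block in its respective code. Since the $\F_4R$ reverse complement is $\bc_{\mbox{rev}}^{c} = ((\bc_{1})_{\mbox{rev}}^{c},(\bc_{2})_{\mbox{rev}}^{c})$, with the first block complemented over $\F_4$ and the second over $R$, the condition $\bc_{\mbox{rev}}^{c} \in C$ is equivalent to the conjunction $(\bc_{1})_{\mbox{rev}}^{c} \in C_{1}$ and $(\bc_{2})_{\mbox{rev}}^{c} \in C_{2}$.

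For the ``if'' direction I would argue directly: assuming $C_{1}$ and $C_{2}$ are each reversible complement, take any $\bc = (\bc_{1},\bc_{2}) \in C$. Then $\bc_{1} \in C_{1}$ forces $(\bc_{1})_{\mbox{rev}}^{c} \in C_{1}$, and $\bc_{2} \in C_{2}$ forces $(\bc_{2})_{\mbox{rev}}^{c} \in C_{2}$; hence $\bc_{\mbox{rev}}^{c} = ((\bc_{1})_{\mbox{rev}}^{c},(\bc_{2})_{\mbox{rev}}^{c}) \in C_{1} \otimes C_{2} = C$, so $C$ is reversible complement.

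For the ``only if'' direction I would exploit linearity to isolate each block. Since $C_{2}$ is a code it contains $\0$, so for any $\bc_{1} \in C_{1}$ the word $(\bc_{1},\0)$ lies in $C$. Applying the hypothesis gives $((\bc_{1})_{\mbox{rev}}^{c},(\0)_{\mbox{rev}}^{c}) \in C = C_{1} \otimes C_{2}$, and reading off the first block yields $(\bc_{1})_{\mbox{rev}}^{c} \in C_{1}$; as $\bc_{1}$ was arbitrary, $C_{1}$ is reversible complement over $\F_4$. The same argument applied to $(\0,\bc_{2})$ shows $C_{2}$ is reversible complement over $R$. As a consistency check, the second block of $((\bc_{1})_{\mbox{rev}}^{c},(\0)_{\mbox{rev}}^{c})$ must likewise lie in $C_{2}$, which recovers $(v,v,\ldots,v) = (\0)_{\mbox{rev}}^{c} \in C_{2}$, matching the condition $v(X^{\beta}-1)/(X-1) \in C_{2}$ appearing in the preceding theorem.

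I do not expect a serious obstacle here, since the entire content sits in the product decomposition of Lemma~\ref{lemma:four}. The only point demanding care is confirming that the reverse-complement map genuinely respects the block splitting: that the $\F_4$-complement applied to the first $\alpha$ coordinates and the $R$-complement applied to the last $\beta$ coordinates are exactly the restrictions appearing in the two separate notions of ``reversible complement'' for $C_{1}$ and $C_{2}$. Because the $\F_4R$ reverse complement is defined block-wise, rather than as a single reversal of the full length-$(\alpha+\beta)$ word, which would entangle the two alphabets, this compatibility is immediate and the biconditional follows at once.
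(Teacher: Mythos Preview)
Your proposal is correct and follows essentially the same approach as the paper: both directions hinge on Lemma~\ref{lemma:four} and the block-wise definition $\bc_{\mbox{rev}}^{c}=((\bc_{1})_{\mbox{rev}}^{c},(\bc_{2})_{\mbox{rev}}^{c})$, so that reversible complementarity of $C$ decouples into that of $C_1$ and $C_2$. The only cosmetic difference is that for the ``only if'' direction the paper takes an arbitrary pair $(\bc_1,\bc_2)\in C_1\otimes C_2$ and reads off both components at once, whereas you isolate the blocks via $(\bc_1,\0)$ and $(\0,\bc_2)$; both arguments are equivalent given the product structure.
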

	
\begin{proof}
Let $C$ be an $\F_4R$-skew cyclic code generated by $\left(f(X),\0\right)$ and $(\0,g(X))$. Lemma~\ref{lemma:four} shows how to find $C=C_{1} \otimes C_{2}$. Let $\bc=(\bc_{1},\bc_{2})\in C = C_{1} \otimes C_{2}$ with $\bc_{1}\in C_{1}$ and $\bc_{2}\in C_{2}$. Suppose that $C_{1}$ and $C_{2}$ are reversible complement over $\F_4$ and $R$, respectively. Then we have $(\bc_{1})_{\mbox{rev}}^{c} \in C_{1}$ and $(\bc_{2})_{\mbox{rev}}^{c} \in C_{2}$. Thus, $((\bc_{1})_{\mbox{rev}}^{c}, (\bc_{2})_{\mbox{rev}}^{c})= \bc_{\mbox{rev}}^{c} \in C_{1} \otimes C_{2}=C$.
		
Conversely, let $\bc_{1} \in C_{1}$ and $\bc_{2} \in C_{2}$. Then  $\bc=(\bc_{1},\bc_{2}) \in C$. If $C$ is reversible complement, then   $\bc_{\mbox{rev}}^{c}=((\bc_{1})_{\mbox{rev}}^{c},(\bc_{2})_{\mbox{rev}}^{c}) \in C = C_{1} \otimes C_{2}$. This implies $\bc_{1,\mbox{rev}}^{c}\in C_{1}$ and $\bc_{2,\mbox{rev}}^{c} \in C_{2}$, as required. 
\end{proof}

\section{Conclusion}\label{sec:conclu}

We have presented our studies on skew cyclic codes over the ring $\F_4R$. Their  algebraic structures as left submodules of a skew-polynomial ring are investigated, resulting in the identification of their generators. The fact that, under some simple conditions on their length, they are equivalent to cyclic or $2$-quasi-cyclic codes over the same ring is established. Towards the end we show how the setup leads naturally to DNA codes and prove a condition on the associated generator polynomial of an $\F_4R$-skew cyclic code that guarantee the code to be reversible complement. We are now looking into whether the class of codes that we propose here contains those with better relative distance or size than known DNA codes.

\section*{Acknowledgments}
We would like to thank Jeremy Le Borgne for the MAGMA source code for factorization in $\F_q[X,\theta]$.

\medskip
Received xxxx 20xx; revised xxxx 20xx.
\medskip

\end{document}